\documentclass[a4paper,11pt]{article}
\usepackage{amsmath, amsthm, amssymb}
\usepackage{tabularx}
\usepackage{algorithm}
\usepackage{algpseudocode}
\usepackage{tikz}
\usetikzlibrary {arrows.meta,automata,positioning}
\usepackage{hyperref}
\hypersetup{breaklinks=True}
\usepackage{xurl}
\usepackage{geometry}
\usepackage{comment}

\makeatletter
\newcommand{\multiline}[1]{%
  \begin{tabularx}{\dimexpr\linewidth-\ALG@thistlm}[t]{@{}X@{}}
    #1
  \end{tabularx}
}
\makeatother

\newtheorem{theorem}{Theorem}[section]

\newtheorem{lemma}[theorem]{Lemma}

\theoremstyle{definition}
\newtheorem{definition}[theorem]{Definition}
\newtheorem{example}[theorem]{Example}

\title{Optimal Multidimensional Convolutional Codes} 

\author{Z. Abreu, J. Lieb, R. Pinto and R. Sim\~oes}

\begin{document}
\maketitle
\begin{abstract}
In this paper, we analyze $m$-dimensional ($m$D) convolutional codes with finite support, viewed as a natural generalization of one-dimensional (1D) convolutional codes to higher dimensions. An $m$D convolutional code with finite support consists of codewords with compact support indexed in $\mathbb{N}^m$ and taking values in $\mathbb{F}_{q}[z_1,\ldots,z_m]^n$, where $\mathbb{F}_{q}$ is a finite field with $q$ elements. We recall a natural upper bound on the free distance of an $m$D convolutional code with rate $k/n$ and degree~$\delta$, called $m$D generalized Singleton bound. Codes that attain this bound are called maximum distance separable (MDS) $m$D convolutional codes. As our main result,
 we develop new constructions of MDS $m$D convolutional codes based on superregularity of certain matrices. Our results include the construction of new families of MDS $mD$  convolutional codes of rate $1/n$, relying on generator matrices with specific row degree conditions. These constructions significantly expand the set of known constructions of MDS $m$D convolutional codes.
\end{abstract}

\section{Introduction}

A one-dimensional ($1$D) convolutional code can be described as an \( \mathbb{F}_{q}[z] \)-submodule of \( \mathbb{F}_{q}[z]^n \), where \( \mathbb{F}[z] \) denotes the polynomial ring in a single indeterminate over the field with $q$ elements \( \mathbb{F}_{q} \). One significant advantage of convolutional codes is their enhanced error correction capabilities compared to block codes, especially in scenarios where data is transmitted in a continuous stream. The structure of convolutional codes allows the detection and correction of errors that may occur across multiple transmitted symbols, rather than being limited to fixed blocks. Notable contributions to the theory of convolutional codes were made by Forney, see \cite{Forney1970a,Forney1973,Forney1974}. The search for convolutional codes with optimal encoding and decoding properties remains an active area of research. An excellent introduction to convolutional codes can be found in the books \cite{JohannessonZigangirov1999,LiebPintoRosenthal2021,LinCostello1983}.

Multidimensional ($m$D) convolutional codes extend the concept of convolutional codes to polynomial rings with multiple variables. Consider the polynomial ring \( R = \mathbb{F}_{q}[z_1, \dots, z_m] \) in \( m \) indeterminates over \( \mathbb{F}_{q} \). An \( m \)-dimensional convolutional code of length \( n \) is defined as an \( R \)-submodule of \( R^n \).

$m$D codes offer significant advantages in the transmission of multidimensional data. For instance, $2$D convolutional codes are suited for applications such as transmitting images and videos as $2$D data. With the advancement of higher dimensional codes, we are confident that these applications will find more use in the future.

While one-dimensional (1D) convolutional codes have been widely investigated, two-dimensional (2D) convolutional codes have received comparatively less attention in the literature. The concept of 2D convolutional codes was first introduced by Fornasini and Valcher in \cite{ValcherFornasini1994,FornasiniValcher1994}. Subsequently, the authors in \cite{ClimentNappPereaPinto2016} established an upper bound for the free distance of 2D convolutional codes and proposed optimal code constructions. Further research on the construction and characterization of 2D convolutional codes can be found in \cite{AlmeidaNappPinto2018,AlmeidaNappPinto2015,ClimentNappPereaPinto2012}. 

Despite these advances, there are still relatively few explicit constructions of maximum distance separable (MDS) 2D convolutional codes. In \cite{ClimentNappPereaPinto2012}, a family of 2D convolutional codes with rate $1/n$ and degree $\delta$ was proposed for
\[
n \geq \frac{(\delta + 1)(\delta + 2)}{2}.
\]
Later, \cite{AlmeidaNappPinto2015} introduced constructions of 2D MDS convolutional codes with rate $1/n$ and degree $\delta$ for $\delta \leq 2$ and $n \geq \delta + 1$. In \cite{ClimentNappPereaPinto2016}, the existence of MDS 2D convolutional codes with rate $k/n$ and degree $\delta$ was proven through a constructive procedure under the following conditions:
\begin{align*}
n \geq k \frac{\left(\left\lfloor \frac{\delta}{k} \right\rfloor + 2 \right) \left(\left\lfloor \frac{\delta}{k} \right\rfloor + 3 \right)}{2}, \quad \text{if } k \nmid \delta,
\end{align*}
and
\begin{align*}
n \geq k \frac{\left(\left\lfloor \frac{\delta}{k} \right\rfloor + 1 \right) \left(\left\lfloor \frac{\delta}{k} \right\rfloor + 2 \right)}{2}, \quad \text{if } k \mid \delta.
\end{align*}
More recently, \cite{AlmeidaNappPinto2018} presented additional constructions of 2D MDS convolutional codes with rate $k/n$ and degree $\delta$ satisfying
\[
n \geq k \left( \left\lfloor \frac{\delta}{k} \right\rfloor + 1 \right).
\]
Together, these works establish important theoretical foundations and constructive methods for the development of 2D convolutional codes with optimal distance properties.

Higher-dimensional convolutional codes have garnered even less research. $m$D convolutional codes were first introduced in \cite{Weiner1998,GluesingLuerssenRosenthalWeiner2000} and further examined in \cite{Charoenlarpnopparut2009,CharoenlarpnopparutTantaratana2004,Kitchens2002,Zerz2010,Lomadze2020}. Key distinctions exist between $1$D and $2$D convolutional codes, as well as between $2$D and $m$D codes, with $m \geq 3$, with these differences being thoroughly explored by Weiner in \cite{Weiner1998}. Very recently, in \cite{MDSMulti2025}, the authors established an upper bound on the distance of an $m$D convolutional code of rate $k/n$ and degree $\delta$, called the $m$D generalized Singleton bound. They also presented the first construction of 3D convolutional codes with finite support, of rate $1/n$ and degree $\delta \leq 2$, that achieve this bound and so have the maximum possible distance compared to any other 3D convolutional codes with the same rate and degree.

In this paper, we develop new constructions of MDS $m$D convolutional codes of rate $1/n$. These constructions rely on superregularity conditions on suitable matrices and extend known 1D results to the multidimensional setting. Our goal is to develop an new approach for constructing MDS convolutional codes in higher dimensions.

The outline of this paper is as follows. In Section 2, we present the main concepts of multidimensional convolutional codes and, for completeness, we also provide a proof of the upper bound on the distance of $m$-dimensional ($m$D) convolutional codes. In Section 3, we focus on $m$D convolutional codes of rate $1/n$ and degree $\delta$, and we present constructions of such codes that are maximum distance separable (MDS). In the last theorem, we then extend these to the more general case of rate $k/n$. Finally, Section 4 provides a brief summary of the results of the paper.

\section{Preliminaries}

In this section, we introduce the fundamental concepts of multidimensional convolutional codes. These codes generalize $1$D convolutional codes to polynomial rings in multiple variables. Let $R=\mathbb{F}_{q}[z_1,\dots, z_m ]$ denote the polynomial ring in $m$ variables with coefficients in $\mathbb{F}_{q}$.

\begin{definition}[\hspace{-0.1mm}\cite{Weiner1998}]
An \textbf{$m$D} (finite support) \textbf{convolutional code} $\mathcal{C}$ of rate $k/n$ is a free $R$-submodule of $R^{n} $ of rank $k$. 

A full row rank matrix $G(z_1,\dots, z_m) \in R^{k \times n}$, whose rows form a basis for $\mathcal{C}$, is called a \textbf{generator matrix}\index{Generator matrix} of $\mathcal{C}$ and therefore

{\small
\[
\begin{aligned}
\mathcal{C} &= \operatorname{Im}_{R}G(z_1, \dots, z_m)\\
&= \left\{ v(z_1, \dots, z_m) \in R^n :\right.\\
&\quad \left. v(z_1, \dots, z_m) = u(z_1, \dots, z_m)\, G(z_1, \dots, z_m) \text{ with } u(z_1, \dots, z_m) \in R^k \right\}.
\end{aligned}
\]
}
\end{definition}


\begin{definition}[\hspace{-0.1mm}\cite{Weiner1998}]
A square matrix $U(z_1, \dots, z_m) \in R^{k \times k}$ is \textbf{unimodular} if there is a matrix $V(z_1, \dots, z_m) \in R^{k \times k}$ such that $$U(z_1, \dots, z_m) \cdot V(z_1, \dots, z_m) = V (z_1, \dots, z_m)\cdot U(z_1, \dots, z_m) = I_{k}.$$
\end{definition}

A matrix $U(z_1, \dots, z_m)\in R^{k \times k}$ is unimodular if and only if $$
\det(U(z_1, \dots, z_m)) \in \mathbb{F}_q \setminus \{0\} \quad \text{\cite{Weiner1998}}.$$

Two full row rank matrices $G_1(z_1, \dots, z_m)$, $G_2(z_1, \dots, z_m) \in R^{k \times n}$ are said to be \textbf{equivalent generator matrices} if they are generator matrices of the same $m$D convolutional code, i.e., if $$\operatorname{Im}_R G_1(z_1, \dots, z_m)= \operatorname{Im}_R G_2(z_1, \dots, z_m),$$
which happens if and only if $$G_1(z_1,\dots, z_m) = U(z_1,\dots,z_m)G_2(z_1,\dots, z_m)$$ for some unimodular matrix\index{Unimodular matrix} $U(z_1, \dots, z_m) \in R^{k \times k}$ \cite{Weiner1998}.\\


Let $\alpha = \begin{bmatrix} \alpha_1 & \dots & \alpha_m\end{bmatrix} \in \mathbb{N}^m$. By $z^{\alpha}$ we mean $z_1^{\alpha_1}\cdot \ldots \cdot z_m^{\alpha_m}$. A polynomial $f \in R$ can be written as $$f(z_1, \dots, z_m) = \sum\limits_{\alpha \in \mathbb{N}^{m}}f_{\alpha}z^{\alpha} \text{ where } f_{\alpha} \in \mathbb{F}_{q}.$$ In a similar way a vector $w = \begin{bmatrix} w_1 & \cdots & w_n \end{bmatrix} \in {R}^n$ can be written as $$w = \sum\limits_{\alpha \in \mathbb{N}^{m}}w_\alpha z^\alpha \text{ where } w_\alpha \in \mathbb{F}_{q}^{n}.$$

\begin{definition}
The \textbf{weight}\index{Weight} of $f \in R$ is denoted by $\operatorname{wt}(f)$ and is the number of nonzero terms of $f$. 
If $w = \begin{bmatrix} w_1 & \cdots & w_n \end{bmatrix} \in {R}^n$, then the weight of $w$ is given by $$\operatorname{wt}(w)=\sum\limits_{j=1}^{n} \operatorname{wt}(w_j).$$
Equivalently, if $w = \sum\limits_{\alpha \in \mathbb{N}^{m}}w_\alpha z^\alpha \text{ where } w_\alpha \in \mathbb{F}_{q}^{n}$ then $$\operatorname{wt}(w)=\sum\limits_{\alpha \in \mathbb{N}^m}\operatorname{wt}(w_\alpha).$$
\end{definition}


\begin{definition}[\hspace{-0.1mm}\cite{Weiner1998}] Given two elements $w, \Tilde{{w}} \in R^n$, the (Hamming) \textbf{distance} between $w$ and $\Tilde{w}$ is given by $d(w, \Tilde{w}) = \operatorname{wt}(w-\Tilde{w})$. 
The \textbf{free distance}\index{Free distance} of an $m$D convolutional code $\mathcal{C}\subset R^n$ is given by $$d_{free}(\mathcal{C})= \min \{ d(w, \Tilde{w}): w, \Tilde{w} \in \mathcal{C}, w \neq \Tilde{w}\}.$$
\end{definition}

The linearity of an $m$D convolutional code $\mathcal{C}$ implies that
$$d_{free}(\mathcal{C}) = \min \{\operatorname{wt}(w): w \in \mathcal{C}, w \neq 0\}.$$

The (total) degree of a polynomial $$f = \sum\limits_{\alpha \in \mathbb{N}^{m}}f_{\alpha}z^{\alpha} \in R$$ is  given by 
$$\max\{\alpha_1+ \dots+ \alpha_m: f_{\alpha} \neq 0\}.$$ 

The $i$-th row degree\index{Row degree}, $\nu_i$, of a polynomial matrix $G(z_1,\dots,z_m)\in R^m$ is defined as the maximum degree of the entries in its $i$-th row. The \textbf{external degree}\index{External degree} of $G(z_1, \dots, z_m)$ is the sum of its row degrees, i.e., $\sum\limits_{i=1}^{k} \nu_i$. 

\begin{definition}
Let $\mathcal{C}$ be an $m$D convolutional code. The \textbf{degree} of $\mathcal{C}$ is defined as the minimum of the external degrees among all the generator matrices of $\mathcal{C}$.    
\end{definition}


\begin{example}\label{newexampledegree2D}
Let $R = \mathbb{F}_2[z_1, z_2]$. Consider the $2$D convolutional code $\mathcal{C}$ with generator matrix
\[
G(z_1, z_2) = \begin{bmatrix}
1 & z_1 & 0 \\
1 & z_2 & 1
\end{bmatrix} \in R^{2 \times 3}.
\]
The external degree of $G_3(z_1, z_2)$ is $\nu_1+\nu_2 = 1 + 1 = 2$. We now show that no generator matrix for $\mathcal{C}$ can have external degree less than 2. Suppose, for contradiction, that there exists a generator matrix \( G'(z_1, z_2) \) of \( \mathcal{C} \) with one row degree equal to 0 (i.e., all entries in \( \mathbb{F}_2 \)).
 Let us denote it by $v \in \mathbb{F}_2^{1 \times 3}$. Then, since 
  \[
G'(z_1,z_2) = U(z_1,z_2) G(z_1,z_2),
\]
 for some unimodular matrix $U(z_1,z_2) \in R^{2 \times 2}$, we have that
\[
v = p \begin{bmatrix} 1 & z_1 & 0 \end{bmatrix} + q \begin{bmatrix} 1 & z_2 & 1 \end{bmatrix} 
= \begin{bmatrix} p + q & p z_1 + q z_2 & q \end{bmatrix},
\]
for some $p, q \in R$. Since $v \in \mathbb{F}_2^{1 \times 3}$, $p + q, q, pz_1+qz_2 \in \mathbb{F}_2$ which implies that $p = q = 0$, i.e., 
$v = \begin{bmatrix} 0 & 0 & 0 \end{bmatrix}$, which is not possible because a generator matrix must be full row rank. 
Hence, the minimum possible external degree for a generator matrix of $\mathcal{C}$ is 2 and then \( \mathcal{C} \) has degree 2.
\end{example}

The \textbf{internal degree}\index{Internal degree} of $G(z_1, \dots, z_m)\in R^{k\times n}$ is defined as the maximal degree of its full-size minors. Since the generator matrices of an $m$D convolutional code differ by pre-multiplication of a unimodular matrix, their full-size minors differ by a nonzero constant and therefore they have the same internal degree. We can therefore introduce the notion of complexity of $m$D convolutional codes.

\begin{definition}
Let $\mathcal{C}$ be an $m$D convolutional code. The \textbf{complexity}\index{Complexity} of $\mathcal{C}$ is defined as the internal degree of any generator matrix of $\mathcal{C}$.
\end{definition}

Note that the internal degree of a polynomial matrix is always smaller than or equal to its external degree. Thus the complexity of an $m$D convolutional code is smaller than or equal to its degree.

Observe also that the degree of a $1$D convolutional code is equal to its complexity, because a $1$D convolutional code always has row-reduced\index{Row-reduced} generator matrices where the external degree matches the internal degree. However, for $m$D convolutional codes, such generator matrices do not always exist and there exist $m$D convolutional codes where the complexity and degree can differ. 

\begin{example}
Let $R= \mathbb{F}_{2}[z_1, z_2]$. Consider again the $2$D convolutional code $\mathcal{C}$ with generator matrix
\[
G_{3}(z_1,z_2) = \begin{bmatrix}
1 & z_1 & 0\\
1 & z_2 & 1
\end{bmatrix} \in R^{2 \times 3}.
\]
Although $\mathcal{C}$ has degree 2 (see \hyperref[newexampledegree2D]{Example \ref{newexampledegree2D}}), its complexity is 1, since the $2 \times 2$ minors of $G_3(z_1,z_2)$ are:
\[
\begin{vmatrix}
1 & z_1 \\
1 & z_2
\end{vmatrix}
= z_1 + z_2, \quad
\begin{vmatrix}
1 & 0 \\
1 & 1
\end{vmatrix}
= 1, \quad
\begin{vmatrix}
z_1 & 0 \\
z_2 & 1
\end{vmatrix}
= z_1
\]
and the degrees of these minors are $1$, $0$, and $1$, respectively.
\end{example} 



The following theorem, presented in \cite{MDSMulti2025}, provides an upper bound on the distance of $m$D convolutional codes with rate $k/n$ and degree $\delta$, known as the \textbf{$m$D generalized Singleton bound}. For completeness, we include its proof below. Before doing so, we introduce a technical lemma that will be used in the proof of the theorem.

\begin{lemma}[\cite{10.5555/579402}]\label{card1}
  Let $m,\nu \in \mathbb N$ and 
  
  $$S=\{(i_1, \dots, i_m) : 0 \leq i_1+i_2+\cdots+ {i}_m \leq \nu\}.$$
  
    Then $\# S = \frac{(\nu + m)!}{\nu ! m!}$.
\end{lemma}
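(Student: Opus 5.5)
The count is $\binom{\nu+m}{m}$, and I would prove this by a bijection (stars and bars) rather than by induction, because it gives the formula directly. The set $S$ consists of all $m$-tuples of nonnegative integers with $i_1+\dots+i_m \le \nu$.

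\textbf{Step 1: turn the inequality into an equation.} Introduce a slack variable $i_{m+1} = \nu - (i_1+\cdots+i_m) \ge 0$. The map $(i_1,\dots,i_m)\mapsto(i_1,\dots,i_m,i_{m+1})$ is a bijection between $S$ and
$$T=\{(i_1,\dots,i_{m+1})\in\mathbb{N}^{m+1} : i_1+\cdots+i_{m+1}=\nu\}.$$
Its inverse is simply forgetting the last coordinate.

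\textbf{Step 2: encode elements of $T$ as binary words.} Send each $(m+1)$-tuple in $T$ to a word in two symbols: $i_1$ stars, then a bar, then $i_2$ stars, then a bar, and so on, ending with $i_{m+1}$ stars. Every such word has exactly $\nu$ stars and $m$ bars, so its length is $\nu+m$. Conversely, any arrangement of $\nu$ stars and $m$ bars determines a unique tuple, by reading off the run lengths of stars between consecutive bars. So this is a bijection.

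\textbf{Step 3: count.} An arrangement is determined by choosing which $m$ of the $\nu+m$ positions hold bars. Hence
$$\#S=\#T=\binom{\nu+m}{m}=\frac{(\nu+m)!}{\nu!\,m!}.$$

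\textbf{Alternative route.} One could instead induct on $m$ using
$$\#S_{m,\nu}=\sum_{j=0}^{\nu}\#S_{m-1,\nu-j},$$
obtained by fixing $i_m=j$. This reduces to the hockey-stick identity $\sum_{t=0}^{\nu}\binom{t+m-1}{m-1}=\binom{\nu+m}{m}$, with base case $m=1$ giving $\nu+1$.

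\textbf{Main obstacle.} There is no real difficulty here. The only step needing care is checking that the encoding in Step 2 is bijective, in particular that empty runs of stars (coordinates equal to $0$) are handled correctly. The degenerate cases $m=0$ or $\nu=0$ are consistent with the formula.
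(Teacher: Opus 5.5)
Your proof is correct and complete. The slack variable $i_{m+1}=\nu-(i_1+\cdots+i_m)$ gives a bijection onto the $(m+1)$-tuples in $\mathbb{N}^{m+1}$ summing to $\nu$, and stars and bars counts these as $\binom{\nu+m}{m}$. You also rightly read in the implicit constraint $i_j\in\mathbb{N}$, without which $S$ would be infinite.

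The paper does not prove this lemma; it only quotes it from the cited textbook. So there is no in-paper argument to compare with, and your bijection is the standard one.

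One point about your alternative route is worth noting. Splitting by the value $j$ of the last coordinate is exactly the decomposition $S=S_0\cup\cdots\cup S_\nu$ the paper uses in Lemma~\ref{card2}. The paper runs it the other way: it takes Lemma~\ref{card1} as known and reads off $\frac{(\nu+m)!}{\nu!\,m!}=\sum_{i=0}^{\nu}\frac{(\nu+m-i-1)!}{(\nu-i)!\,(m-1)!}$, which is the hockey-stick identity. On the inductive route you therefore could not cite Lemma~\ref{card2} for the summation step without circularity. You would need to prove the hockey-stick identity on its own, for example by induction on $\nu$ using Pascal's rule $\binom{\nu+m}{m}=\binom{\nu+m-1}{m}+\binom{\nu+m-1}{m-1}$. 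Your bijective proof avoids this, and combined with the same decomposition it gives Lemma~\ref{card2} as a corollary, which is how the paper uses it.
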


\begin{theorem}
\label{thmDgensinbou}
Let $\mathcal{C}$ be an $m$D convolutional code of rate $k/n$ and degree $\delta$. Then 
\begin{align*}
d_{free}(\mathcal{C})\leq n \frac{(\lfloor \frac{\delta}{k} \rfloor +m)!}{\lfloor \frac{\delta}{k} \rfloor!m!}-k\Big(\Big\lfloor \frac{\delta}{k} \Big\rfloor + 1\Big) + \delta +1.
\end{align*}
\end{theorem}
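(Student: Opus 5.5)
Let $G(z_1,\dots,z_m)\in R^{k\times n}$ be a generator matrix of $\mathcal{C}$ with external degree exactly $\delta$, and row degrees $\nu_1\le \nu_2\le \dots\le \nu_k$, so that $\sum_i \nu_i=\delta$. The plan mirrors the classical 1D argument: bound the free distance by the weight of a single well-chosen row, or by a small combination of rows, and count how many coefficient positions such a codeword can occupy. Write $t=\lfloor \delta/k\rfloor$ and $r=\delta-kt$, so $0\le r<k$. Since the $\nu_i$ are integers averaging $\delta/k$, at least $k-r$ of them are at most $t$ (pigeonhole). In particular $\nu_1\le t$.

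\textbf{Step 1 (support count).} Any vector $v\in R^n$ whose entries all have total degree at most $\nu$ has its monomial coefficients indexed by $S_\nu=\{\alpha\in\mathbb{N}^m:|\alpha|\le \nu\}$. By Lemma~\ref{card1}, $\#S_\nu=\frac{(\nu+m)!}{\nu!m!}$, so $\operatorname{wt}(v)\le n\,\#S_\nu$. This gives the leading term $n\frac{(t+m)!}{t!m!}$ once we restrict to codewords of degree at most $t$.

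\textbf{Step 2 (forcing zeros by linear algebra).} Consider the rows $g_1,\dots,g_\ell$ of $G$ whose degree is at most $t$; there are $\ell\ge k-r$ of them. Look at constant combinations $v=\sum_{i\le \ell} c_i g_i$ with $c\in\mathbb{F}_q^{\ell}\setminus\{0\}$. Such a $v$ is nonzero because $G$ has full row rank. It is a codeword whose coefficients are supported in $S_t$, a total of $n\#S_t$ scalar positions. Choose $\ell-1$ of these scalar positions and impose that they vanish. This is $\ell-1$ homogeneous linear conditions on $\ell$ unknowns, so there is a nonzero solution, and hence
\[
\operatorname{wt}(v)\le n\#S_t-(\ell-1).
\]
To obtain the sharper term $-k(t+1)+\delta+1 = -(k-r)+\dots$, more freedom is needed. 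I would also allow polynomial multipliers: for a row of degree $\nu_i\le t$, multiply it by every monomial $z^\beta$ with $|\beta|\le t-\nu_i$, keeping the degree at most $t$. The number of free scalar parameters is then $\sum_{i:\nu_i\le t}\#S_{t-\nu_i}$. It suffices to show this number is at least $k(t+1)-\delta$, because then we can kill $k(t+1)-\delta-1$ positions and the bound follows. The resulting $v$ is nonzero because the rows of $G$ are $R$-linearly independent.

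\textbf{Step 3 (the counting inequality — main obstacle).} We need
\[
\sum_{i:\nu_i\le t}\#S_{t-\nu_i}\ \ge\ \sum_{i=1}^k (t+1-\nu_i).
\]
This holds because:
\begin{itemize}
\item For each $i$ with $\nu_i\le t$, we have $\#S_{t-\nu_i}\ge t-\nu_i+1$, since $S_{t-\nu_i}$ contains at least the multiples $je_1$ for $0\le j\le t-\nu_i$.
\item Rows with $\nu_i\ge t+1$ contribute nonpositive terms $t+1-\nu_i$ on the right-hand side, so dropping them only increases the right side's deficit in our favour.
\end{itemize}
Summing gives the inequality, and the right-hand side equals $k(t+1)-\delta$. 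The delicate points I expect to check are:
\begin{itemize}
\item the degenerate case where only one position remains, or the count exceeds $n\#S_t$ (then $v$ can be taken to be $0$ only if it is trivial, which must be ruled out);
\item the $R$-independence argument ensuring the combination is a nonzero codeword.
\end{itemize}
Both follow from full row rank of $G$ over the field of fractions.
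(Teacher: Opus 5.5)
Your proof is correct once you delete one false auxiliary claim, and it follows a different route from the paper.

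\textbf{The false claim.} The pigeonhole assertion that at least $k-r$ row degrees are at most $t$ does not follow from averaging. Take $k=2$, $\delta=2$ (so $t=1$, $r=0$), and the code generated by $\begin{bmatrix}1&0&0\\0&z_1^2&1\end{bmatrix}$. This code has degree $2$, and every generator matrix of external degree $2$ has row degrees $0,2$. So only one row has degree at most $t$, not $k-r=2$. The claim holds only when $\min_i\nu_i=t$. Had it been true, your constant-combination bound $n\#S_t-(\ell-1)$ with $\ell\ge k-r$ would already have finished the proof; the extra freedom is needed precisely because it fails.

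\textbf{Why Step 3 stands.} Step 3 never uses the claim. With monomial multipliers you get $N=\sum_{\nu_i\le t}\#S_{t-\nu_i}\ge\sum_{i=1}^k(t+1-\nu_i)=k-r\ge 1$. The map $c\mapsto v$ is injective by $R$-linear independence of the rows. This also disposes of your ``degenerate case'' worry, since injectivity forces $N\le n\#S_t$. Killing $k-r-1$ coordinates then gives $\operatorname{wt}(v)\le n\binom{t+m}{m}-(k-r-1)$. This is exactly the stated bound, because $-k(t+1)+\delta+1=-(k-r-1)$.

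\textbf{The paper's route.} The paper uses only the rows of minimal degree $\nu_{\min}$. It takes a scalar combination of them to create $\#\{i:\nu_i=\nu_{\min}\}-1$ zeros in the constant coefficient, giving $n\binom{\nu_{\min}+m}{m}-(\#\{i:\nu_i=\nu_{\min}\}-1)$. It then asserts this is largest when $\nu_{\min}=\lfloor\delta/k\rfloor$ and the number of minimal rows is as large as possible. This is the direct analogue of the 1D generalized Singleton bound proof and uses only scalar combinations. However, its final ``maximization'' step tacitly needs a case split:
\begin{itemize}
\item If $\nu_{\min}<t$, one checks $n\binom{\nu_{\min}+m}{m}\le n\binom{t+m}{m}-n\binom{t+m-1}{m-1}\le n\binom{t+m}{m}-n$. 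Then $n\ge k$ gives the bound.
\item If $\nu_{\min}=t$, one needs exactly the averaging fact you stated, which is valid in that case.
\end{itemize}

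\textbf{What your route buys.} Your dimension count avoids comparing degree profiles altogether and treats all cases uniformly. It even yields the intermediate bound $n\binom{t+m}{m}-N+1$, which is never weaker than the stated one. The paper's route is more elementary, but it leaves that comparison step implicit.
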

\begin{proof}
Let $G(z_1, \dots, z_m) \in R^{k \times n}$ be a generator matrix of $\mathcal{C}$ with row degrees $ \nu_1 \geq \nu_2 \geq \cdots\geq \nu_{t-1} > \nu_t = \nu_{t+1} = \nu_{t+2} = \cdots = \nu_k$, for some $t\in\{1,\dots k\}$, such that $\nu_1+\nu_2 +\dots + \nu_k = \delta$.

Let us write $$G(z_1, \dots, z_m) = 
\left[
\begin{array}{cccc}
G^{(1)}(z_1, \dots, z_m)\\
G^{(2)}(z_1, \dots, z_m)
\end{array}
\right]
$$ where $G^{(1)}(z_1, \dots, z_m) \in R^{(t-1) \times n}$ and $G^{(2)}(z_1, \dots, z_m) \in R^{(k-t+1) \times n}$. 

Note that $G^{(2)}(z_1, \dots, z_m)$ has row degrees all equal to $\nu_k$ and therefore we can write
$$G^{(2)}(z_1, \dots, z_m)=\sum\limits_{0 \leq \alpha_1+\cdots+\alpha_{{m}}\leq \nu_k}G^{(2)}_{\alpha_1 \cdots \alpha_{{m}}} z^{(\alpha_1, \cdots, \alpha_{{m}})}.$$

Since $G^{(2)}_{0,\dots,0}$ is a $(k-t+1)\times n$ matrix, there exists $\tilde{u} \in \mathbb{F}_{q}^{1\times (k-t+1)}\setminus\{0\}$ such that $\tilde{u}G^{(2)}_{0,\cdots,0}$ has $k-t$ zero entries. 

Let $u=\left[\begin{array}{cc}0_{1\times(t-1)} & \tilde{u}\end{array}\right]$. We have that
\begin{align*}
    \operatorname{wt}(uG(z_1, \dots, z_m)) & =  \operatorname{wt}(\tilde{u}G^{(2)}(z_1, \dots, z_m))\\
    & = \operatorname{wt}\big(\tilde{u}G^{(2)}_{0 \cdots 0}\big) + {\sum\limits_{1 \leq \alpha_1 + \dots + \alpha_{{m}} \leq \nu_k}\operatorname{wt}\big(\tilde{u}G^{(2)}_{\alpha_1 \cdots \alpha_{{m}}}\big)}
\end{align*} 
By \hyperref[card1]{Lemma \ref{card1}}, it follows that
\begin{align*}
    \operatorname{wt}(uG(z_1, \dots, z_m)) & \leq n-(k-t)+n \Big( \frac{(\nu_k+m)!}{\nu_k!m!} -1\Big)\\
    & = n \Big(\frac{(\nu_k+m)!}{\nu_k!m!}\Big) - (k-t)
\end{align*} 
and therefore $$d_{free} (\mathcal{C}) \leq n \frac{(\nu_k+m)!}{\nu_k!m!} - (k-t).$$ 
The maximum value of this upper bound is achieved by maximizing $\nu_k$ and then $t$, i.e. when $\nu_k = \Big\lfloor \frac{\delta}{k} \Big\rfloor$ and $t=\delta-k \Big\lfloor \frac{\delta}{k} \Big\rfloor+1$ and therefore
\begin{align*}
\label{eq1}
d_{free}(\mathcal{C})\leq n \frac{(\lfloor \frac{\delta}{k} \rfloor +m)!}{\lfloor \frac{\delta}{k} \rfloor!m!}-k\Big(\Big\lfloor \frac{\delta}{k} \Big\rfloor + 1\Big) + \delta +1.
\end{align*}
\end{proof}

An $m$D convolutional code with rate $k/n$ and degree $\delta$ whose free distance attains the corresponding $m$D generalized Singleton bound is called \textbf{Maximum Distance Separable (MDS) $m$D convolutional code}\index{Maximum Distance Separable (MDS)}.


\section{Construction of MDS $m$D Convolutional Codes of Rate \texorpdfstring{$1/n$}{1/n}}
\normalsize

In this section, we consider $m$D convolutional codes of rate $1/n$ and degree $\delta$, and we present constructions of such codes that are MDS.

The next definition defines superregular matrices which will be useful for the construction of these codes.


\begin{definition}\label{defsuperregular}
A matrix $A \in\mathbb F_q^{k\times n}$ is \textbf{superregular}\index{Superregular matrix} if all of its minors (of all sizes) are nonzero.
\end{definition}

It is clear that any submatrix of a superregular matrix is also superregular and that a matrix obtained from a superregular matrix by permutation of rows is also superregular.

\begin{lemma}\label{lemmalincob}
Let $A \in \mathbb F_q^{r \times s}$ with $r\leq s$ be a superregular matrix. Then, for any $u \in \mathbb{F}_{q}^{r}\setminus \{0\}$, we have that $$\operatorname{wt}(uA) \geq s-\operatorname{wt}(u)+1.$$
\end{lemma}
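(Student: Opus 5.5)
The plan is a counting argument: I will show that $uA$ has at most $\operatorname{wt}(u)-1$ zero entries. Assume this fails, and derive a singular square submatrix of $A$, contradicting superregularity.

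First restrict $u$ to its support. Let $w=\operatorname{wt}(u)\geq 1$, and let $I\subseteq\{1,\dots,r\}$ be the set of indices with $u_i\neq 0$, so $\#I=w$. Write $\tilde u\in\mathbb F_q^{w}$ for the restriction of $u$ to $I$; all its entries are nonzero. Let $A_I\in\mathbb F_q^{w\times s}$ be the submatrix of $A$ formed by the rows indexed by $I$. Then $uA=\tilde u A_I$, and by the remark after Definition~\ref{defsuperregular}, $A_I$ is again superregular.

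Now suppose, for contradiction, that $\operatorname{wt}(uA)\leq s-w$. Then $\tilde uA_I$ has at least $w$ zero coordinates. Choose $w$ of them, indexed by a set $J$ with $\#J=w$; this is possible since $w\leq r\leq s$. Let $B$ be the $w\times w$ submatrix of $A_I$ with columns in $J$. Then $\tilde u B=0$ with $\tilde u\neq 0$, so $\det B=0$. But $\det B$ is a $w\times w$ minor of $A$, which is nonzero by superregularity. This contradiction gives $\operatorname{wt}(uA)\geq s-w+1$.

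There is no real obstacle. The one point needing care is that the number of zero coordinates must be compared with $\operatorname{wt}(u)$ rather than with $r$. This is exactly why we restrict to the support of $u$ first: it ensures the square submatrix has size $w$, which is at most $s$. The condition $r\leq s$ is what guarantees $w\leq s$, so that the bound $s-w+1$ is meaningful and the columns of $J$ exist.
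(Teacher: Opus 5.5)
Your proof is correct and complete. The paper states this lemma without proof, treating it as a standard fact, and your argument is the standard justification: restrict to the support of $u$, then note that $\operatorname{wt}(u)$ zero coordinates of $uA$ would produce a square submatrix $B$ with $\tilde u B=0$, hence a vanishing minor. One small point: the column set $J$ exists because of your contradiction hypothesis (at least $w$ zero coordinates), not because $w\leq r\leq s$. In fact the inequality holds even without $r\leq s$, since its right-hand side is nonpositive when $\operatorname{wt}(u)>s$.
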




In \cite{LiebPinto2020} constructions of MDS 1D convolutional codes were presented using superregular matrices. Given a polynomial matrix $G(z)=\sum\limits_{i=0}^{\delta} G_iz^i \in \mathbb F[z]^{1 \times n}$, with $G_{\delta} \neq 0$, let us consider the $(\delta + 1) \times n$ constant matrix \begin{equation}\label{constmatrix}
\Phi_1(G(z))=\left[\begin{matrix}G_0 \\ G_1 \\ \vdots \\ G_{\delta}\end{matrix}\right].
\end{equation}

\begin{theorem}\cite{LiebPinto2020}\label{th1DMDS}
Let $n, \delta \in \mathbb N$ with $n \geq \delta + 1$ and $G(z)=\sum\limits_{i=0}^{\delta} G_iz^i \in \mathbb F[z]^{1 \times n}$, with $G_{\delta} \neq 0$. If $\Phi_1(G(z))$ is superregular, then $G(z)$ is a generator matrix of an MDS 1D convolutional code of rate $1/n$ and degree $\delta$.
\end{theorem}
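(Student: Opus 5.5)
For $k=1$ and $m=1$ the $m$D generalized Singleton bound of Theorem \ref{thmDgensinbou} equals $n(\delta+1)-(\delta+1)+\delta+1=n(\delta+1)$. The plan is to check that $G(z)$ generates a code of rate $1/n$ and degree $\delta$, and then to show that every nonzero codeword has weight at least $n(\delta+1)$. The degree part is immediate. $G(z)\neq 0$, so it has full row rank. Every generator matrix of the code is $cG(z)$ with $c\in\mathbb F_q\setminus\{0\}$, since the only $1\times 1$ unimodular matrices are nonzero constants. Hence the degree of the code is the row degree of $G(z)$, which is $\delta$ because $G_\delta\neq 0$.

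For the distance, take a nonzero codeword $v(z)=u(z)G(z)$. Dividing by a power of $z$ does not change the weight, so we may assume $u(z)=\sum_{j=0}^{s}u_jz^j$ with $u_0\neq 0$ and $u_s\neq 0$. Writing $v(z)=\sum_{t=0}^{s+\delta}v_tz^t$, the key observation is
\[
v_t=\sum_{i=0}^{\delta}u_{t-i}G_i = w_t\,\Phi_1(G(z)), \qquad w_t=\begin{bmatrix}u_t & u_{t-1} & \cdots & u_{t-\delta}\end{bmatrix}\in\mathbb F_q^{\delta+1},
\]
where $u_j=0$ for $j\notin\{0,\dots,s\}$. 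The matrix $\Phi_1(G(z))$ is $(\delta+1)\times n$ and superregular, and $\delta+1\le n$. So Lemma \ref{lemmalincob} gives $\operatorname{wt}(v_t)\ge n-\operatorname{wt}(w_t)+1$ whenever $w_t\neq 0$.

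The main obstacle is that some windows $w_t$ may vanish. This happens exactly when $u(z)$ has a run of at least $\delta+1$ consecutive zero coefficients strictly inside $\{0,\dots,s\}$. We handle it first. In that case write $u(z)=u'(z)+z^{r}u''(z)$, where $\deg u'<r-\delta$ and $u''(0)\neq 0$. Then $u'(z)G(z)$ has degree $<r$, while $z^ru''(z)G(z)$ only involves powers $z^t$ with $t\ge r$. The two products therefore have disjoint supports, so $\operatorname{wt}(v)=\operatorname{wt}(u'G)+\operatorname{wt}(u''G)$. By induction on $s$ it suffices to treat $u(z)$ without such gaps; the weight is then even larger than needed.

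When there are no gaps, every $w_t$ with $0\le t\le s+\delta$ is nonzero. Each coefficient $u_j$ appears in exactly $\delta+1$ windows, so $\sum_t\operatorname{wt}(w_t)=(\delta+1)\operatorname{wt}(u)\le(\delta+1)(s+1)$. Summing the lemma's bound over $t$ gives
\[
\operatorname{wt}(v)\ge (s+\delta+1)(n+1)-(\delta+1)(s+1)= n(\delta+1)+s(n-\delta).
\]
This is at least $n(\delta+1)$ because $n\ge\delta+1$. Hence $d_{free}\ge n(\delta+1)$, and combined with Theorem \ref{thmDgensinbou} the code is MDS.
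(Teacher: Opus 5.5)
Your proof is correct. The paper itself does not prove this statement: it is quoted from \cite{LiebPinto2020}, and the proof of Theorem~\ref{th1D} explicitly defers its case $\ell=0$ to it. The relevant comparison is therefore with the argument the paper gives for Theorem~\ref{th1D}, which is stated for $\ell\ge 1$ but whose case structure, read with $\ell=0$, proves this statement.

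There, vanishing windows are avoided by a case split on $s=\deg u$ rather than by decomposing $u$. If $s\ge\delta+1$, only the two disjoint blocks $v_0,\dots,v_\delta$ and $v_s,\dots,v_{s+\delta}$ are estimated. Their windows contain $u_0$, respectively $u_s$, so they are automatically nonzero, and Lemma~\ref{lemmalincob} gives $2\bigl(n(\delta+1)-\tfrac{\delta(\delta+1)}{2}\bigr)\ge n(\delta+1)$ because $n\ge\delta$. If $s\le\delta$, every window meets $\{0,s\}$, so all windows are nonzero and one sums over all coefficients, exactly as you do.

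Your decomposition $u=u'+z^{r}u''$ into pieces with disjoint supports, together with strong induction on $s$, replaces that case split. Your exact double count $\sum_t\operatorname{wt}(w_t)=(\delta+1)\operatorname{wt}(u)$ then gives the sharper estimate $n(\delta+1)+s(n-\delta)$ for gap-free $u$. As a by-product, only the codewords $c\,z^{a}G(z)$ attain weight $n(\delta+1)$. The price is the extra induction. The paper's split is more direct, and it is the form the paper extends to the staggered-degree block setting of Theorem~\ref{th1D}.
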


We include the following theorem because it will be used later in the proof of the \hyperref[thmD]{Theorem \ref{thmD}}. It gives an upper bound on the free distance of 1D convolutional codes whose generator matrices have specific row degrees, and it also provides a condition under which such codes achieve that upper bound.

\begin{theorem}\label{th1D}
    Let $\ell\in\mathbb N_0$ and $G_i(z) \in\mathbb F_q[z]^{k_i\times n}$ with $k_i\in\mathbb N$ for $i=0,1,\dots, \ell-1$ and $G_{\ell}(z) \in\mathbb F_q[z]^{1\times n}$ such that $$G(z)=\left[\begin{array}{c} 
    G_0(z) \\
    G_1(z) \\
    \vdots \\ G_{\ell}(z) \end{array}\right] \in \mathbb F[z]^{(k_0+\cdots+k_{\ell-1}+1) \times n}$$ is the generator matrix of a convolutional code $\mathcal{C}$,
    where, for $i=0,1,\dots,\ell$, all row degrees of $G_i(z)$ are equal to $\nu+\ell-i$ for some $\nu \in \mathbb N$. 
    Then
    $$
    d_{free}({\cal C}) \leq n (\nu+1).
    $$
    Moreover, let ${\cal G}=\left[\begin{array}{c} 
   \Phi_1(G_0(z)) \\
    \Phi_1(G_1(z)) \\
    \vdots \\ \Phi_1(G_{\ell}(z)) \end{array}\right]\in\mathbb F^{L\times n}$ where $L=\displaystyle\sum_{j=0}^{\ell-1} k_j(\nu+\ell+1-j)+\nu+1$. If $n \geq L$ and ${\cal G}$ is superregular, then
    $$
    d_{free}({\cal C}) =  n (\nu+1).
    $$
\end{theorem}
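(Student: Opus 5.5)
The plan has two parts: the upper bound via an explicit low-weight codeword, and the matching lower bound via Lemma \ref{lemmalincob} applied to the superregular matrix $\mathcal G$.

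\textbf{Upper bound.} I take $u$ to be the unit vector selecting the last row $G_\ell(z)$, whose degree is $\nu$. The codeword $G_\ell(z)$ has $\nu+1$ coefficient vectors in $\mathbb F_q^n$, so its weight is at most $n(\nu+1)$. This gives $d_{free}(\mathcal C)\le n(\nu+1)$.

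\textbf{Lower bound.} Let $u(z)=\sum_{t=0}^{s}u_t z^t\in\mathbb F_q[z]^{k_0+\cdots+k_{\ell-1}+1}$ be nonzero with $u_s\neq 0$. Dividing by a power of $z$ does not change the weight, so I may assume $u_0\neq 0$. Write $v(z)=u(z)G(z)=\sum_j v_j z^j$. The strategy is to find enough coefficient vectors $v_j$, each expressible as $\tilde u\,\mathcal G'$ with $\mathcal G'$ a row-submatrix of $\mathcal G$ of size at most $L\le n$ and $\tilde u$ a nonzero vector of small weight. Lemma \ref{lemmalincob} then gives $\operatorname{wt}(v_j)\ge n-\operatorname{wt}(\tilde u)+1$ for each such $j$.

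\textbf{Case 1: $u_0$ is nonzero only on the last block.} Then $v_0=u_0^{(\ell)}G_{\ell,0}$ has weight $n$. I expect to need more.

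\textbf{Key step: a sliding-window bound.} Consider the block Sylvester (sliding) matrix of $G$. The coefficients $v_0,\dots,v_{N}$ equal $[u_0,\dots,u_N]$ times a block-Toeplitz matrix whose rows are shifted copies of the rows of $\mathcal G$. The natural approach is to choose, for each $j$, the rows of $\mathcal G$ that contribute to $v_j$. Each row of $G_i(z)$ contributes its coefficient of index $j-t$ from $u_t$, and these rows are distinct rows of $\mathcal G$ as long as the relevant pairs $(t,\text{row})$ give distinct coefficient indices. This holds within a single $j$, since each fixed row of $G$ appears with different coefficient indices for different $t$.

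So $v_j=\tilde u^{(j)}\mathcal G^{(j)}$, where $\mathcal G^{(j)}$ is a submatrix of $\mathcal G$ with rows permuted. Superregularity passes to such submatrices, and $\operatorname{wt}(\tilde u^{(j)})$ counts the nonzero pairs among $u_t$ entries with $j-t$ inside the support range of the corresponding row. Summing over $j=0,\dots,\deg v$ gives
\[
\operatorname{wt}(v)\ \ge\ \sum_{j\in J}\bigl(n-\operatorname{wt}(\tilde u^{(j)})+1\bigr),
\]
where $J$ is the set of $j$ with $\tilde u^{(j)}\neq 0$.

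\textbf{Counting.} A nonzero entry of $u_t$ in a row of degree $\nu+\ell-i$ appears in exactly $\nu+\ell-i+1$ of the vectors $\tilde u^{(j)}$. Hence $\sum_j\operatorname{wt}(\tilde u^{(j)})=\sum_{t,r}(\deg\text{row}_r+1)$. Meanwhile $|J|\ge s+\nu_{\max}+1$, where $\nu_{\max}$ is the largest degree of a row actually used.

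Comparing the two, I need $|J|(n+1)-\sum\operatorname{wt}\ge n(\nu+1)$. The main obstacle is this inequality: when many entries of $u$ are nonzero, the total weight subtracted can be large. I would handle it using $n\ge L$: each $\tilde u^{(j)}$ has weight at most $L\le n$, so every $j\in J$ contributes at least $1$. Alternatively I would bound by keeping only the first $\nu+1$ indices $j$ (or the last ones, by symmetry). For $j\le\nu$, only $u_0,\dots,u_j$ enter, and I would argue $\operatorname{wt}(v_j)\ge n-\operatorname{wt}(\tilde u^{(j)})+1$ with $\tilde u^{(j)}$ small, then pad using the tail coefficients $v_{\deg v-j}$. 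Combining the first and last windows so that the subtracted weights are compensated by the extra length $\ell-i$ of higher-degree rows is where I expect the real work. If the direct count fails, I would fall back on an induction on $\ell$, reducing to Theorem \ref{th1DMDS} when $\ell=0$.
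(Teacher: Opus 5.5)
\textbf{The gap: the key inequality is never proved.} Your setup is correct. For fixed $j$, the pairs $(t,r)$ with $0\le j-t\le d_r$ select distinct rows of $\mathcal G$. So Lemma~\ref{lemmalincob} gives $\operatorname{wt}(v_j)\ge n+1-w_j\ge 1$ for every $j\in J$, where $w_j=\operatorname{wt}(\tilde u^{(j)})\le L\le n$. But the proposal stops exactly where the content of the theorem lies. The inequality $\sum_{j\in J}(n+1-w_j)\ge n(\nu+1)$ is named as the main obstacle and then left unproved.

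The patches you offer do not close it:
\begin{itemize}
\item The observation that every $j\in J$ contributes at least $1$ only gives $\operatorname{wt}(v)\ge |J|$.
\item The auxiliary estimate $|J|\ge s+\nu_{\max}+1$ is false. $J$ may have gaps: put $u_0$ on a row of $G_0$ and $u_s$, with $s$ large, on $G_\ell$. Even when $J$ is an interval, the row of maximal degree may occur only at an early time, so that $\max J=s+\nu$.
\item The fallbacks (front and back windows, induction on $\ell$) are only mentioned, not carried out. In the paper this is precisely the long case analysis on $\deg w=\nu+i$ (Cases 1, 2.1, 2.2.1, 2.2.2). There the zeros forced in $u$ by the vanishing top coefficients supply the compensating terms $(\ell-i+t)k_0+\cdots+k_{\ell-i+t-1}$.
\end{itemize}

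\textbf{How to close it.} Your direct count can be completed; the missing idea is to bound $\sum_j w_j$ by a full packing rather than through $|J|$. Since all terms are positive, it suffices to sum over one connected component $[a,a+M]$ of $J$. Every interval $[t,t+d_r]$ meeting this component lies inside it. Hence $\sum_j w_j$ over the component is at most $\sum_{r:\,d_r\le M}(M-d_r+1)(d_r+1)$, and $M\ge\nu$.

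For $M=\nu+\ell+s$ with $s\ge 0$, all rows fit, and this sum equals $(s+1)L+\sum_{i=1}^{\ell}i\,k_i(\nu+\ell-i+1)$, with $k_\ell=1$. Using $n\ge L$,
\[
(M+1)(n+1)-n(\nu+1)-\sum_{r}(M-d_r+1)(d_r+1)\ \ge\ (\ell-1)L+\nu+\ell+s+1-\sum_{i=1}^{\ell}i\,k_i(\nu+\ell-i+1)\ \ge\ \ell+s\ \ge\ 0 .
\]
The last step holds because $(\ell-1)L-\sum_{i=1}^{\ell}i\,k_i(\nu+\ell-i+1)\ge-(\nu+1)$: the coefficient $\ell-1$ dominates $i$ for $i\le\ell-1$, and only $i=\ell$ leaves a deficit.

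For $\nu\le M<\nu+\ell$, only the blocks $G_i$ with $i\ge \ell-(M-\nu)$ fit. The same computation applies with $\ell$ replaced by $M-\nu$ and $L$ by a smaller $L'\le n$. The case $M=\nu$ is the equality case $G_\ell$.

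With this estimate your global count gives a proof considerably shorter than the paper's case analysis. It also makes the paper's step of forcing zeros in $u$ automatic, since $v_j\neq 0$ for all $j\in J$. Without it, the lower bound $d_{free}(\mathcal C)\ge n(\nu+1)$ is not established.
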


\begin{proof}
First we observe that for $\ell=0$ the statement of this theorem is exactly the statement of \hyperref[th1DMDS]{Theorem \ref{th1DMDS}} and hence, for this proof we can assume $\ell\geq 1$.

Note that the codeword $G_{\ell}(z)$ has weight $n(\nu+1)$ which proves that $ d_{free}({\cal C}) \leq n (\nu+1)$.
To prove that $ d_{free}({\cal C}) = n (\nu+1)$ write
$$
G(z)=G^{(0)} + G^{(1)}z+ \cdots + G^{(\nu+1)}z^{\nu+1} + \cdots + G^{(\nu+\ell)}z^{\nu+\ell},
$$
where, for 
$j=0,1,\dots,\nu$, $G^{(j)}$ has $k_0+k_1+\cdots+k_{\ell-1}+1$ nonzero rows, and for $i=1,2,\dots,\ell$, 
\begin{align}\label{nonzero_rows_in_G}
G^{(\nu+i)}=\left[\begin{array}{c} \tilde G^{(\nu+i)} \\ 0 \end{array}\right]\quad\text{with}\ \ \  \tilde G^{(\nu+i)} \in \mathbb F_q^{k_0+k_1+\cdots + k_{\ell-i}},
\end{align} 
where all rows of $\tilde G^{(\nu+i)}$ are nonzero. Set $G^{(j)}=0$, for $j > \nu + \ell$.

Let us also write 
$$
u(z)=\sum_{i \in \mathbb N_0} u^{(i)}z^i,
$$
with $u^{(i)}=\left[\begin{array}{ccccc}  
u^{(i)}_0 &  u^{(i)}_1 & \cdots & u^{(i)}_{\ell-1} & u^{(i)}_{\ell}
\end{array}\right]$,
where $u^{(i)}_j \in \mathbb F_q^{k_j}$, $j=0,1,\dots,\ell-1$ and $u^{(i)}_{\ell} \in \mathbb F_q$. Set $u^{(j)}=0$, for $j <0$.

Let us prove that $w(z)=u(z)G(z)$ has weight greater or equal than $n(\nu+1)$. We can consider without loss of generality that $u^{(0)} \neq 0$. For \( w(z) = \sum\limits_{i \in \mathbb{N}_{0}} w_i z^{i} \), let us denote
$$ w|_{[r,s]} = \sum_{i=r}^{s} w_i z^{i}, \text{ for } ( r < s ).$$

Since $\cal{G}$ is a superregular matrix and $u^{(0)} \neq 0$ we have that $w(z)$ has degree greater or equal than $\nu$. \\
\underline{Case 1}: $\deg w(z)= \nu$.\\
Then,
\begin{align}\label{nu}
w^{(\nu+1)}= \left[\begin{array}{cccc}  
u^{(\nu+1)} &  u^{(\nu)} & \cdots & u^{(0)}
\end{array}\right] \left[\begin{array}{c}  
G^{(0)} \\  G^{(1)} \\ \vdots\\ G^{(\nu+1)}
\end{array}\right]=0.
\end{align}
Recall that for $j=0,\hdots,\nu$, all rows of $G^{(j)}$ are nonzero and $G^{(\nu+1)}$ has exactly the last row equal to zero since we can assume $\ell\geq 1$. After deleting this zero row, the matrix in \eqref{nu} is a submatrix of $\cal{G}$, which is full row rank since $\cal{G}$ is superregular with at least as many columns as rows. Hence, \eqref{nu} implies that $u^{(\nu+1)}=u^{(\nu)}=\cdots=u^{(1)}=0$ 
and $u^{(0)}_j=0$ for $j=0,1,\dots,\ell-1$.
Moreover, it can be easily seen that $u^{(\nu+i)}=0$ for $i \geq 2$,  i.e., $u(z)=\left[\begin{array}{cccc}  
0 & \cdots & 0 & \bar u
\end{array}\right]$ with $\bar u \in \mathbb F_q \backslash \{ 0 \}$. 
Thus, $w(z)=\bar u\, G_{\ell}(z)$, which has weight $n(\nu +1)$.\\
\underline{Case 2}: $\deg w(z)=\nu + i$ for some $i \geq 1$.\\
Then, $$
w^{(\nu+i+1)} = \left[\begin{array}{cccc}  
u^{(\nu+i+1)} &  u^{(\nu+i)} & \cdots & u^{(i+1-\ell)}
\end{array}\right] \left[\begin{array}{c}  
G^{(0)} \\  G^{(1)} \\ \vdots\\ G^{(\nu+\ell)}
\end{array}\right]=0,
$$
which implies that 
\begin{align}\label{zero_us}
u^{(\nu+i+1)}=u^{(\nu+i)}=\cdots= u^{(i+1)}=0.
\end{align}
Moreover, since $w^{(\nu+i+j)}=0$, for $j>2$, it follows that $u^{(\nu+i+j)}=0$, for $j>2$, which means that $\deg u(z)=i$. 

Furthermore, for $s=0, \dots, \ell-1$,
\begin{align}\label{nonzeros_in_u}
u^{(i-s)}= \left[\begin{array}{cc}  
0  & \bar u^{(i-s)}
\end{array}\right]\quad\text{where}\ \ \  \bar u^{(i-s)} \in \mathbb F_q^{1+k_{\ell-1}+\cdots+ k_{\ell-s}},
\end{align}
since $G^{(\nu+1+s)}$ has $k_0+k_1+\cdots+k_{\ell-1-s}$ nonzero rows; see \eqref{nonzero_rows_in_G}.

Then, for $j=0,1,\dots,\mbox{min}\{i-1, \nu\}$,
\begin{eqnarray*}w^{(\nu+i-j)}& =&\left[\begin{array}{cccccc}  
u^{(i)} & \cdots & u^{(i+1-\ell)} & u^{(i-\ell)} \cdots & u^{(i-\ell-j)}
\end{array}\right] \left[\begin{array}{c}  
G^{(\nu - j)} \\  \vdots \\ G^{(\nu+\ell)}
\end{array}\right] \\
& =&\left[\begin{array}{cccccc}  
\bar u^{(i)} & \cdots & \bar u^{(i+1-\ell)} & u^{(i-\ell)} \cdots & u^{(i-\ell-j)}
\end{array}\right] \left[\begin{array}{c}  
\bar G^{(\nu - j)} \\  \vdots \\ \bar G^{(\nu - j-1+\ell)} \\G^{(\nu - j+\ell)}\\ \vdots \\ G^{(\nu+\ell)}
\end{array}\right],
\end{eqnarray*}
where $G^{(\nu - j+s)}=\left[ \begin{array}{c}  
\hat G^{(\nu - j+s)}  \\ \bar G^{(\nu - j+s)} \end{array} \right]$ for $s=0,1,\dots, \ell-1$, 
with $\bar G^{(\nu - j)} \in \mathbb F_q^{1 \times n}$ 
and $\bar G^{(\nu - j+s)} \in \mathbb F_q^{(1+k_{\ell-1} + \cdots + k_{\ell-s}) \times n}$. 
Since the number of rows of the matrix $\left[\begin{array}{c}  
\bar G^{(\nu - j)} \\  \vdots \\ \bar G^{(\nu - j-1+\ell)} \\ G^{(\nu - j+\ell)}\\ \vdots \\ G^{(\nu+\ell)}
\end{array}\right]$ is 
\begin{align*}&\quad \sum_{s=0}^{\ell-1}(1+k_{\ell-1}+\cdots+k_{\ell-s})+\sum_{i=0}^j(k_0+\cdots+k_i)\\
&=\sum_{i=0}^{\ell-1}(1+k_{\ell-1}+\cdots+k_{i-1})+\sum_{i=0}^j(k_i+\cdots+k_0)\\
&=\begin{cases}(j+1)(k_0+k_1+\cdots+k_{\ell-1}+1)+\sum_{i=j+1}^{\ell-1}(1+k_{\ell-1}+\cdots+k_{i-1}), & j\leq \ell-1\\
\ell(k_0+k_1+\cdots+k_{\ell-1}+1)+\sum_{i=\ell}^j(k_i+\cdots+k_0) ,& j>\ell-1
\end{cases}
\end{align*}
where $k_{\ell}=1$ and $k_t=0$ for $t>\ell$, there are at least $(j+1)(k_0+k_1+\cdots+k_{\ell-1}+1)$ nonzero rows 
which are a submatrix of $\cal G$. Moreover, $\deg(u(z))=i$ implies $\bar u^{(i)} \neq 0$ and thus, by \hyperref[lemmalincob]{Lemma \ref{lemmalincob}}, we have that 
\begin{align}\label{j}
\mbox{wt}(w^{(\nu+i-j)})\geq n-(j+1)(k_0+k_1+\cdots+k_{\ell-1}+1)+1. 
\end{align}

Let us analyze the cases in which \( i \geq \nu+1 \) and \( i < \nu+1 \) separately.

\underline{Case 2.1}: $i \geq \nu+1$\\
In this case, \eqref{j} is true for $j=0,\hdots,\nu$ and we obtain {\small
\[
\begin{aligned}
\mbox{wt}(w_{|[i,\nu+i]}) &= \sum_{j=0}^\nu \mbox{wt}(w^{(\nu+i-j)})\\
&\geq n(\nu+1)-\frac{(\nu+1)(\nu+2)}{2}(k_0+k_1+\cdots+k_{\ell-1}+1) + \nu +1.
\end{aligned}
\]
}
Moreover, we have that for $t=0,\,\dots,\nu$,

\begin{align}\label{up_to_nu}
w^{(t)}=\left[\begin{array}{cccc}  
u^{(t)} & u^{(t-1)} & \cdots & u^{(0)}
\end{array}\right] \left[\begin{array}{c}  
G^{(0)} \\ G^{(1)} \\  \vdots \\ G^{(t)}
\end{array}\right]
\end{align}
has weight greater than or equal to 
$$
n-(t+1)(k_0+k_1+\cdots+k_{\ell-1}+1)+1,
$$
due to \hyperref[lemmalincob]{Lemma \ref{lemmalincob}}
and therefore,
$$\mbox{wt}(w_{|[0,\nu]}) \geq n(\nu+1)-\frac{(\nu+1)(\nu+2)}{2}(k_0+k_1+\cdots+k_{\ell-1}+1) + \nu +1.$$
In sum, one obtains \small{\begin{align*}
  & \mbox{wt}(w(z))\geq \mbox{wt}(w_{|[0,\nu]}) + \mbox{wt}(w_{|[i,\nu+i]})\\
   &\geq n(\nu+1)+n(\nu+1)-(\nu+1)(\nu+2)(k_0+k_1+\cdots+k_{\ell-1}+1)+2(\nu+1)\\
   &\geq n(\nu+1)+(\nu+1)(\nu+2)\sum_{j=0}^{\ell-1}k_j+(\nu+1)^2-(\nu+1)(\nu+2)\left(\sum_{j=0}^{\ell-1}k_j+1\right)+2(\nu+1)\\
   &\geq n(\nu+1)+(\nu+1)(\nu+1-(\nu+2)+2)\geq n(\nu+1),
\end{align*}}

where the third inequality follows from $$n \geq \sum_{j=0}^{\ell-1} k_j(\nu+\ell+1-j)+\nu+1\geq (\nu+2)\sum_{j=0}^{\ell-1}k_j+\nu+1.$$


\underline{Case 2.2}: $i < \nu+1$\\
In this case, we can use \eqref{j} for $j=0,\hdots,i-1$ to obtain $$\mbox{wt}(w_{|[\nu+1,\nu+i]}) \geq ni-\frac{i(i+1)}{2}(k_0+k_1+\cdots+k_{\ell-1}+1) + i.$$

Let us analyze the cases in which \( i \geq \ell-1 \) and $i<\ell-1$ separately.

\underline{Case 2.2.1}: $i < \nu+1$ and $i \geq \ell-1$\\
In this case, using again \hyperref[lemmalincob]{Lemma \ref{lemmalincob}}, we obtain
$$
\mbox{wt}(w^{(t)}) \geq n-(t+1)(k_0+k_1+\cdots+k_{\ell-1}+1)+1 \quad \text{for}\quad t=0,1,\dots, i-\ell,$$

For $t\geq i-\ell+1$, we can exploit that according to \eqref{nonzeros_in_u}, for $s=0,\hdots,\ell-1$, the vector $u^{(i-s)}$ has (at least) $k_0+\cdots+k_{\ell-s-1}$ zero components.
Hence, considering again \eqref{up_to_nu} and observing that for $t\geq i-\ell+1$, the vectors $u^{(t)},\hdots,u^{(i-\ell+1)}$ are involved in the calculation of $w^{(t)}$, we obtain

$$
\mbox{wt}(w^{(t)}) \geq n-(t+1)(k_0+k_1+\cdots+k_{\ell-1}+1)+1 + (\ell-i+t)k_0 + (\ell-i+t-1)k_1 + \cdots + k_{\ell-i+t-1},
$$
for $t=i-\ell+1,\dots, i$ and 
$$
\mbox{wt}(w^{(t)}) \geq n-(i+1)(k_0+k_1+\cdots+k_{\ell-1}+1)+1 + \ell k_0 + (\ell-1)k_1 + \cdots + k_{\ell-1},
$$
for $t=i, i+1, \hdots,\nu$. Therefore, 
$$\mbox{wt}(w_{|[0,i-1]}) \geq ni-\frac{i(i+1)}{2}(k_0+k_1+\cdots+k_{\ell-1}+1) + i + \sum_{j=0}^{\ell-1}\frac{(j+1)(j+2)}{2}k_{\ell-1-j}$$
and
{\small
\[
\begin{aligned}
\mbox{wt}(w_{|[i,\nu]}) &\geq n(\nu-i+1) - (\nu-i+1)(i+1)(k_0+k_1+\cdots+k_{\ell-1}+1) \\
&\quad + \nu - i +1 + (\nu-i+1)(\ell k_0 + (\ell-1)k_1 + \cdots + k_{\ell-1}).
\end{aligned}
\]
}

Thus, since $n\geq L= \displaystyle\sum_{j=0}^{\ell-1} k_j(\nu+\ell+1-j)+\nu+1$, we have that

\begin{align*}
&\mbox{wt}(w(z)) = \mbox{wt}(w_{|[0,\nu]}) + \mbox{wt}(w_{|[\nu+1,\nu+i]})\\
&\geq n(\nu+1)+ni-(i+1)(\nu+1)\left(\sum_{j=0}^{\ell-1}k_0+1\right)+\nu+1+i+\\
&+\sum_{j=0}^{\ell-1}\left(\frac{(j+1)(j+2)}{2}+(j+1)(\nu-i+1)\right)k_{\ell-1-j}\\
&\geq n(\nu+1)+\\
&+ \sum_{j=0}^{\ell-1}\left(i(\nu+\ell+1-j)-(i+1)(\nu+1)+\frac{(j+1)(j+2)}{2}+(j+1)(\nu-i+1)\right)k_{\ell-1-j}+\\
&+i(\nu+1)-(i+1)(\nu+1)+\nu+1+i\\
&\geq n(\nu+1)+\sum_{j=0}^{\ell-1}\left(i(\ell-j-1)+\frac{(j+1)(j+2)}{2}+j(\nu-i+1)\right)k_{\ell-1-j}\geq n(\nu+1).
\end{align*}

\underline{Case 2.2.2: $i < \nu+1$ and $i < \ell-1$}\\
In this case, we have that
$$
w^{(\nu+i+1)} = \left[
\begin{array}{ccc}
u^{(i)} & \cdots & u^{(0)}
\end{array}
\right]
\left[
\begin{array}{c}
G^{(\nu+1)} \\ \vdots \\ G^{(\nu+i+1)}
\end{array}
\right]=0
$$
implies that $u(z)=\left[
\begin{array}{cccc}
0 & \cdots & 0 &  \hat u(z)
\end{array}
\right]$ with $\hat u(z) \in \mathbb F_q[z]^{k_{\ell-i} + \cdots + k_{\ell-1}+1}$, and therefore $w(z)= \hat u(z) \left[
\begin{array}{c}
G_{\ell-i}(z) \\ \vdots \\ G_{\ell}(z)
\end{array}
\right]$.
Setting $\hat{\ell}=i$ and $\hat{k}_j=k_{j+\ell-i}$ for $j=0,\hdots,i$, i.e. in particular $\hat{\ell}\geq i$, brings us into the situation of Case 2.2.1. Hence, we obtain $wt(w(z)\geq n(\nu+1)$.
\end{proof}

Let us now consider general $m$-dimensional convolutional codes. Next, we introduce the following lemma, which will be used in the proof of the subsequent theorem.

 \begin{lemma}\label{card2}
     Let $m,\nu \in \mathbb N$. Then
     
     $$
     \frac{(\nu+m)!}{\nu!m!}= \sum_{i=0}^{\nu}\frac{(\nu+m-i-1)!}{(\nu-i)!(m-1)!}
     $$
 \end{lemma}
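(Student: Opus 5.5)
The identity is a hockey-stick (Vandermonde-type) identity, and the cleanest route is combinatorial, using Lemma \ref{card1}. By Lemma \ref{card1}, the left-hand side $\frac{(\nu+m)!}{\nu!m!}$ counts the set
\[
S=\{(i_1,\dots,i_m)\in\mathbb N^m : i_1+\cdots+i_m\leq \nu\}.
\]
I will partition $S$ according to the value of the last coordinate, $i_m=i$, with $i$ ranging over $0,1,\dots,\nu$. For fixed $i$, the tuples in $S$ with $i_m=i$ are in bijection with the $(m-1)$-tuples $(i_1,\dots,i_{m-1})$ satisfying $i_1+\cdots+i_{m-1}\leq \nu-i$.

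Next I count each piece. When $m\geq 2$, Lemma \ref{card1} applied with $m-1$ in place of $m$ and $\nu-i$ in place of $\nu$ gives
\[
\frac{(\nu-i+m-1)!}{(\nu-i)!(m-1)!}
\]
elements. Summing over $i=0,\dots,\nu$ then gives exactly the right-hand side.

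The only point needing care is the edge case $m=1$, since Lemma \ref{card1} is stated for $m\in\mathbb N$ and $m-1=0$ may fall outside its scope. I will check this case directly. Each summand is $\frac{(\nu-i)!}{(\nu-i)!\,0!}=1$, so the right-hand side equals $\nu+1$, which matches the left-hand side $\frac{(\nu+1)!}{\nu!}$.

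As an alternative or sanity check, the identity can also be obtained purely algebraically. Set $j=\nu-i$ and write each term as $\binom{j+m-1}{m-1}$. The claim then becomes $\sum_{j=0}^{\nu}\binom{j+m-1}{m-1}=\binom{\nu+m}{m}$, which follows by induction on $\nu$ using Pascal's rule $\binom{\nu+m}{m}=\binom{\nu+m-1}{m}+\binom{\nu+m-1}{m-1}$. I do not expect any real obstacle here. The main thing to get right is the reindexing $j=\nu-i$ and the $m=1$ edge case.
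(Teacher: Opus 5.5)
Your proposal is correct and follows essentially the same argument as the paper: partition $S$ by the value of the last coordinate and count each slice with Lemma \ref{card1} applied to $m-1$ coordinates and bound $\nu-i$. Your explicit check of the $m=1$ edge case and the Pascal's-rule alternative are extra care that the paper's proof omits.
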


 \begin{proof}
Let $S=\{(i_1, \dots, i_m) : 0 \leq i_1+i_2+\cdots+ i_m \leq \nu\}$. By Lemma \ref{card1}, $\# S = \frac{(\nu + m)!}{\nu ! m!}$. On the other hand we can write $S$ as the disjoint union of sets as follows
$$
S=S_0 \cup S_1 \cup \cdots \cup S_{\nu},
$$
where $S_j=\{(i_1, \dots, i_{m-1},j) \, : \, 0 \leq i_1+i_2+\cdots+ i_{m-1} \leq \nu -j\}$, $j=0,1,\dots, \nu$. By Lemma \ref{card1}, $\# S_j=\frac{(\nu-j+m-1)!}{(\nu-j)!(m-1)!}$, and the result follows.
 \end{proof}
 
Given a polynomial matrix $G(z_1, \dots, z_m)=\sum\limits_{i=0}^{\mu} G_i(z_1, \dots, z_{m-1})z_m^i \in R^{1 \times n}$, with $G_{\mu}(z_1, \dots, z_{m-1}) \neq 0$, let us define the constant matrix $$\Phi_m(G(z_1, \dots, z_{m}))=\left[\begin{matrix} \Phi_{m-1}(G_0(z_1, \dots, z_{m-1})) \\\Phi_{m-1}(G_1(z_1, \dots, z_{m-1})) \\ \vdots \\ \Phi_{m-1}(G_{\mu}(z_1, \dots, z_{m-1}))\end{matrix}\right], \mbox{ for } m \geq 2,$$
with $\Phi_1(G_i(z_1))$ as defined in (\ref{constmatrix}).

\begin{theorem}\label{thmD}
    Let $$G(z_1, \dots,z_m)=\left[\begin{array}{c}
    G_0(z_1, \dots, z_m)\\
    G_1(z_1, \dots, z_m) \\
    \vdots \\ G_{\ell}(z_1, \dots, z_m) \end{array}\right] \in R^{(k_0+\cdots+k_{\ell-1}+1) \times n}$$ be a generator matrix of an mD convolutional code $\mathcal{C}$, with\\ $G_0(z_1, \dots, z_m) \in \mathbb{F}_{q}[z_1,\dots,z_m]^{k_0 \times n}$, 
    $G_1(z_1, \dots, z_m) \in \mathbb{F}_{q}[z_1,\dots,z_m]^{k_1 \times n},$ $\dots$,\\ $G_{\ell-1}(z_1, \dots, z_m) \in \mathbb{F}_{q}[z_1,\dots,z_m]^{{k}_{\ell-1} \times n}$, $ G_{\ell}(z_1, \dots, z_m) \in \mathbb{F}_{q}[z_1,\dots,z_m]^{1 \times n}$ of degrees $\nu_0 \geq \nu_1 \geq \cdots \geq \nu_{\ell-1} > \nu_{\ell}$, respectively. Then,
    $$
    d_{free}({\cal C}) \leq n \frac{(\nu_{\ell}+m)!}{\nu_{\ell}!m!}.
    $$
    Moreover, if $n \geq k_0(\nu_0+1)+k_1(\nu_1+1)+\cdots+k_{\ell-1}(\nu_{\ell-1+1})+\nu_{\ell}+1$ and ${\Phi_{m}(G(z_1,\dots, z_m))}$ is superregular, then
    $$
    d_{free}({\cal C}) = n \frac{(\nu_{\ell}+m)!}{\nu_{\ell}!m!}.
    $$
\end{theorem}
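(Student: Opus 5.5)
The upper bound is immediate. The last row $G_\ell(z_1,\dots,z_m)$ is itself a codeword, and it has total degree $\nu_\ell$. So it has at most $\#\{\alpha\in\mathbb N^m : |\alpha|\le \nu_\ell\}$ nonzero coefficient vectors in $\mathbb F_q^n$. By Lemma~\ref{card1} this number is $\frac{(\nu_\ell+m)!}{\nu_\ell!m!}$. Hence
\[
\operatorname{wt}(G_\ell)\le n\,\frac{(\nu_\ell+m)!}{\nu_\ell!\,m!},
\]
which gives the first claim.

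For the lower bound I would argue by induction on $m$. The base case $m=1$ is Theorem~\ref{th1D}, or rather a version of it allowing general degrees $\nu_0\ge\dots\ge\nu_{\ell-1}>\nu_\ell$ rather than consecutive ones. Since a block of equal degree can be merged into a single $k_j$, this amounts to redoing the 1D argument with the same superregularity bookkeeping. In that version $L=\sum_j k_j(\nu_j+1)+\nu_\ell+1$ is exactly the number of rows of $\Phi_1(G)$.

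For the inductive step, take a nonzero codeword $w=uG$ with $u\in R^{1\times k}$. I would write
\[
u=\sum_i u_i(z_1,\dots,z_{m-1})\,z_m^i,\qquad G=\sum_{j} G^{(j)}(z_1,\dots,z_{m-1})\,z_m^j,
\]
so that $w=\sum_s w_s z_m^s$ with $w_s=\sum_{i+j=s}u_iG^{(j)}$. By Lemma~\ref{card2},
\[
\frac{(\nu_\ell+m)!}{\nu_\ell!\,m!}=\sum_{j=0}^{\nu_\ell}\frac{(\nu_\ell-j+m-1)!}{(\nu_\ell-j)!\,(m-1)!},
\]
and the $j$-th term is exactly the $(m-1)$D bound for polynomials of degree $\nu_\ell-j$. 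The aim is therefore to show that $w$ contains at least $\nu_\ell+1$ distinct $z_m$-slices $w_{s_0},\dots,w_{s_{\nu_\ell}}$, with the $t$-th slice of weight at least $n\frac{(\nu_\ell-t+m-1)!}{(\nu_\ell-t)!(m-1)!}$.

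To get these slices, I would mimic the 1D proof with the lowest and highest nonzero $z_m$-powers of $u$. The idea is to view each slice as an $(m-1)$D codeword generated by the stacked slices $G^{(j)}$, whose $\Phi_{m-1}$-images are disjoint row blocks of $\Phi_m(G)$ and hence superregular. The row counts are controlled by $n\ge L$, which is why the hypothesis bounds $n$ by the number of rows of $\Phi_m(G)$. One then applies Lemma~\ref{lemmalincob} coefficientwise, or the induction hypothesis, to these stacked matrices. The degree in $(z_1,\dots,z_{m-1})$ of the coefficient $G^{(j)}$ of row $i$ is at most $\nu_i-j$, so the $j$-th slice behaves like an $(m-1)$D code with top degree $\nu_\ell-j$ in the last row.

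The main obstacle is this slice-wise lower bound. In general $w_s$ is a combination $\sum_{i+j=s}u_iG^{(j)}$, whose $(m-1)$D generator is not of the shape required by the induction hypothesis. There is also the interplay between cancellation at the low end and the high end, which Theorem~\ref{th1D} handled through Cases 1, 2.1, 2.2.1 and 2.2.2. I would first try to reduce to the case where $u_0\ne0$ is the lowest slice. Then $w_0=u_0G^{(0)}$ is a genuine $(m-1)$D codeword of the code generated by $G^{(0)}$, which has last-row degree $\nu_\ell$, and induction applies to it. For the remaining slices I would use the superregularity of the full $\Phi_m(G)$ together with counting of nonzero coefficient rows, as in Case 2 of Theorem~\ref{th1D}, summing the resulting deficits and absorbing them with $n\ge L$. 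If that direct approach fails, the fallback is a purely coefficientwise argument on $\Phi_m(G)$. This means ordering monomials by a graded order, taking the leading and trailing monomials of $u$, and showing that enough coefficients of $w$ are forced nonzero by Lemma~\ref{lemmalincob}.
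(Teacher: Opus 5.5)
Your upper bound, and the overall frame of the lower bound, match the paper: induction on $m$, slicing in $z_m$, normalising to $u_0\neq 0$, and splitting the target via Lemma~\ref{card2} into $\nu_\ell+1$ slice bounds. The gap is the inductive step itself. You secure only $w_0$ and assert that for $s\ge 1$ the slice $w_s$ does not fit the induction hypothesis. You then defer to unspecified 1D-style bookkeeping or a coefficientwise fallback. That step is the whole content of the proof, so as written the lower bound is not established.

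The missing observation, which is the paper's argument, is that the shape does fit for every $s\le\nu_\ell$. Let $H_s$ be $G^{(0)},\dots,G^{(s)}$ stacked vertically. Then $w_s=[\,u_s\ \cdots\ u_0\,]\,H_s$, and the coefficient vector is nonzero because $u_0\neq 0$. In $H_s$ the last row of $G^{(s)}$ has degree $\nu_\ell-s$. Every other row has degree $\nu_j-t\ge \nu_\ell-s+1$: either $t<s$, or $t=s$ and $j<\ell$, using $\nu_{\ell-1}>\nu_\ell$.

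After permuting rows, $H_s$ therefore has exactly the form in the theorem, with last-row degree $\nu_\ell-s$. Its $\Phi_{m-1}$ is a row-submatrix of $\Phi_m(G)$, hence superregular. Its constant term consists of $(s+1)(k_0+\cdots+k_{\ell-1}+1)\le n$ distinct rows of $\Phi_m(G)$, so $H_s$ has full row rank and $w_s\neq 0$. Induction then gives $\operatorname{wt}(w_s)\ge n\binom{\nu_\ell-s+m-1}{m-1}$ for all $s=0,\dots,\nu_\ell$ at once, and Lemma~\ref{card2} finishes. Only the lowest $\nu_\ell+1$ slices are used, so no analogue of the low-end/high-end case split of Theorem~\ref{th1D} is needed at this level.

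One caveat bears on your bookkeeping remarks. What does \emph{not} transfer automatically to $H_s$ is the length hypothesis. Applied to $H_s$, the theorem would require
$n\ge\sum_{t=0}^{s}\bigl(\sum_{j<\ell}k_j(\nu_j-t+1)+\nu_\ell-t+1\bigr)$, which strictly exceeds the stated bound once $s\ge 1$. The paper's formulation of the induction hypothesis drops the condition on $n$ and passes over this. The argument closes cleanly if one assumes $n$ is at least the number of rows of $\Phi_m(G)$.

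However, your claim that the stated hypothesis already says this is wrong for $m\ge 2$. The quantity $k_0(\nu_0+1)+\cdots+\nu_\ell+1$ is the row count of the 1D matrix. A row of degree $\nu_j$ contributes one row to $\Phi_m(G)$ for each monomial of degree at most $\nu_j$, i.e.\ $\binom{\nu_j+m}{m}$ rows. Your remark that the base case needs a version of Theorem~\ref{th1D} for non-consecutive degrees is fair: merging equal degrees handles ties but not gaps. The paper also cites Theorem~\ref{th1D} there without comment.
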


\begin{proof}
 Consider the codeword
    \begin{eqnarray*}
   w(z_1,\dots, z_m)& = &\left[\begin{array}{cccc} 
   0 & \cdots & 0 & 1
  \end{array}\right]
  \left[\begin{array}{c} 
    G_0(z_1,\dots, z_m) \\
    G_1(z_1,\dots, z_m) \\
    \vdots \\ G_{\ell}(z_1,\dots, z_m) \end{array}\right]\\
    & = & G_{\ell}(z_1,\dots, z_m). 
   \end{eqnarray*}
   Then, by \hyperref[card1]{Lemma \ref{card1}}
   $$
   d_{free}({\cal C}) \leq \operatorname{wt}(G_{\ell}(z_1,\dots, z_m)) \leq n\cdot  \# Supp(G_{\ell}(z_1,\dots, z_m)) = n \frac{(\nu_{\ell}+m)!}{\nu_{\ell}!m!},
   $$

   where the support of $ G(z_1,\dots, z_m)=\sum\limits_{i_1,\dots, i_m \in \mathbb N} G_{i_1,\dots,i_m}z_1^{i_1}z_2^{i_2}\cdots z_m^{i_m} \in R^m$ is defined as $$Supp(G(z_1, \dots, z_m))=\{(i_1,\dots,i_m) \in \mathbb N^{m} \,: \, G_{i_1,\dots,i_m} \neq 0\}.$$


   Let us assume that ${\Phi_{m}(G(z_1,\dots, z_m))}$ is superregular and let us prove that \begin{equation}\label{dfree} d_{free}({\cal C}) = n \frac{(\nu_{\ell}+m)!}{\nu_{\ell}!m!}\end{equation}
   by induction. By \hyperref[th1D]{Theorem \ref{th1D}}, (\ref{dfree}) is valid for $m=1$. Let us now consider that (\ref{dfree}) is valid for $m-1$, i.e., let us consider as induction hypothesis that if $$\tilde G(z_1, \dots,z_{m-1})=\left[\begin{array}{c}
    \tilde G_0(z_1, \dots, z_{m-1}) \\
    \tilde G_1(z_1, \dots, z_{m-1}) \\
    \vdots \\ \tilde G_{\tilde \ell}(z_1, \dots, z_{m-1}) \end{array}\right] \in \mathbb F_q[z_1,\dots, z_{m-1}]^{(\tilde \ell+1) \times n}$$ is a generator matrix of an (m-1)D convolutional code $\tilde {\mathcal{C}}$, with $\tilde G_0(z_1, \dots, z_{m-1})$,\\ $\tilde G_1(z_1, \dots, z_{m-1})$, $\dots$,  $\tilde  G_{\tilde \ell}(z_1, \dots, z_{m-1})$ of degrees $\tilde \nu_0 \geq \tilde \nu_1 \geq \cdots \geq \tilde \nu_{\ell-1} > \tilde \nu_{\ell}$, respectively,  and ${\Phi_{m-1}(\tilde G(z_1,\dots, z_m))}$ is superregular, then
    $$
    d_{free}(\tilde {\cal C}) = n \frac{(\nu_{\tilde \ell}+m-1)!}{\nu_{\tilde \ell}!(m-1)!}.
    $$

  Write
   \begin{eqnarray*}
   G(z_1,\dots, z_m) & = & G^{(0)}(z_1,\dots, z_{m-1}) + G^{(1)}(z_1,\dots, z_{m-1})z_m
 + \cdots + \\
 & & \;\; +G^{(\nu_{\ell})}(z_1,\dots, z_{m-1})z_m^{\nu_{\ell}} + \cdots + G^{(\nu_{0})}(z_1,\dots, z_{m-1})z_m^{\nu_{0}}.
 \end{eqnarray*}

 Note that $G^{(i)}(z_1,\dots, z_{m-1})$ has row degrees $\nu_0-i, \nu_1-i, \dots, \nu_{\ell}-i$, for $i=0, \dots, \nu_{\ell}$.

 Let us consider $u(z_1,\dots, z_m) \in R^{\ell+1} \backslash \{0\}$ and write
 $$
 u(z_1,\dots, z_m)=u^{(0)}(z_1,\dots, z_{m-1}) + u^{(1)}(z_1,\dots, z_{m-1})z_m
 + \cdots + u^{(t)}(z_1,\dots, z_{m-1})z_m^{t},
 $$
 for some $t \in \mathbb N$. Let
 $$
 w(z_1,\dots, z_m)=u(z_1,\dots, z_m)G(z_1,\dots, z_m), $$
 and let us prove that 
\[
\operatorname{wt}(w(z_1, \dots, z_m)) 
\geq 
n \, \frac{(\nu_{\ell} + m)!}{\nu_{\ell}! \, m!}.
\]
We can consider, without loss of generality, that 
\( u^{(0)}(z_1, \dots, z_{m-1}) \neq 0 \).
Write
\[
w(z_1, \dots, z_m)
= \sum\limits_{i \in \mathbb{N}}
w^{(i)}(z_1, \dots, z_{m-1}) \, z_m^{i},
\]
where
\[
\begin{aligned}
w^{(i)}(z_1, \dots, z_{m-1})
&=
\begin{bmatrix}
    u^{(i)}(z_1, \dots, z_{m-1}) &
    \cdots &
    u^{(0)}(z_1, \dots, z_{m-1})
\end{bmatrix} \\[4pt]
&\quad \times
\begin{bmatrix}
    G^{(0)}(z_1, \dots, z_{m-1}) \\[3pt]
    \vdots \\[3pt]
    G^{(i)}(z_1, \dots, z_{m-1})
\end{bmatrix}
z_m^{i},
\qquad \text{for } i = 0, 1, \dots, \nu_{\ell}.
\end{aligned}
\]

Note that since $\Phi_m(G(z_1,\cdots,z_m))$ is a superregular matrix, 
all the matrices 
\[
\begin{aligned}
&\Phi_{m-1}\!\left(\begin{bmatrix}
    G^{(0)}(z_1, \dots, z_{m-1})
    \end{bmatrix}
\right)
\\[1em]
&\Phi_{m-1}\!\left(
\begin{bmatrix}
    G^{(0)}(z_1, \dots, z_{m-1}) \\
    G^{(1)}(z_1, \dots, z_{m-1})
\end{bmatrix}
\right)
\\[1em]
& \qquad\qquad\qquad\qquad\quad\vdots
\\[1em]
&\Phi_{m-1}\!\left(
\begin{bmatrix}
    G^{(0)}(z_1, \dots, z_{m-1}) \\
    \vdots \\
    G^{(\nu_{\ell})}(z_1, \dots, z_{m-1})
\end{bmatrix}
\right)
\end{aligned}
\]

are also superregular. Moreover, the matrices
$$
 G^{(0)}(z_1,\dots, z_{m-1}), \left[\begin{array}{c}
    G^{(0)}(z_1,\dots, z_{m-1})  \\G^{(1)}(z_1,\dots, z_{m-1}) 
 \end{array}\right], \cdots, \left[\begin{array}{c}
    G^{(0)}(z_1,\dots, z_{m-1})  \\ \vdots \\ G^{(\nu_{\ell})}(z_1,\dots, z_{m-1}) 
 \end{array}\right]
 $$
 have the last row of degree $\nu_{\ell},\nu_{\ell}-1, \dots, 0$, respectively, and the remaining rows of each matrix have higher degree than their last row. Thus, by induction hypothesis, 
 $$
 \operatorname{wt}(w^{(i)}(z_1,\dots, z_{m-1})) \geq  n \frac{(\nu_{\ell}+m-i-1)!}{(\nu_{\ell}-i)!(m-1)!},
 $$
 $i=0,1,\dots, \nu_{\ell}$.
 Thus, by Lemma \ref{card2}, $\operatorname{wt}(w(z_1, \dots,z_m)) \geq    n  \frac{(\nu_{\ell}+m)!}{\nu_{\ell}!m!}.$
 
 \end{proof}

 A direct consequence of this theorem is the following condition for the existence of MDS $m$D convolutional codes of rate $1/n$, since for $k=1$, the generalized Singleton bound from \hyperref[thmDgensinbou]{Theorem \ref{thmDgensinbou}} is equal to $n  \frac{(\delta+m)!}{\delta!m!}$.

 \begin{theorem}
     Let $G(z_1, \dots,z_m) \in R^{1 \times n}$ with row degree equal to $\delta$ and $n \geq \delta+1$, such that $\Phi_m(G(z_1, \dots,z_m))$ is superregular. Then $G(z_1, \dots,z_m)$ is a generator matrix of an MDS $m$D convolutional code of rate $1/n$ and degree $\delta$.
 \end{theorem}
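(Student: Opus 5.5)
The plan is to realize the statement as the special case $\ell=0$ of Theorem \ref{thmD} and then compare the resulting distance with the $m$D generalized Singleton bound of Theorem \ref{thmDgensinbou} for $k=1$.

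\textbf{Step 1: the module is a code of rate $1/n$.} Take $\ell=0$ in Theorem \ref{thmD}, so that $G(z_1,\dots,z_m)$ consists only of the last block $G_\ell=G$, with $\nu_\ell=\delta$. Superregularity of $\Phi_m(G)$ implies that all its entries are nonzero, so $G\neq 0$. A nonzero $1\times n$ row over the domain $R$ is full row rank. Hence $\mathcal C=\operatorname{Im}_R G$ is a free $R$-module of rank $1$, i.e.\ an $m$D convolutional code of rate $1/n$.

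\textbf{Step 2: the degree is $\delta$.} Any other generator matrix has the form $uG$ with $u$ a $1\times 1$ unimodular matrix. Such a $u$ is a nonzero constant, so every generator matrix has row degree exactly $\delta$, and the degree of $\mathcal C$ is $\delta$.

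\textbf{Step 3: the free distance.} The hypothesis of Theorem \ref{thmD} in the case $\ell=0$ reduces to $n\geq \nu_\ell+1=\delta+1$, which we have, and $\Phi_m(G)$ is superregular by assumption. So Theorem \ref{thmD} gives
\[
d_{free}(\mathcal C)= n\frac{(\delta+m)!}{\delta!\,m!}.
\]

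\textbf{Step 4: comparison with the Singleton bound.} For $k=1$ we have $\lfloor\delta/k\rfloor=\delta$. The bound of Theorem \ref{thmDgensinbou} then reads
\[
n\frac{(\delta+m)!}{\delta!\,m!}-(\delta+1)+\delta+1 = n\frac{(\delta+m)!}{\delta!\,m!},
\]
which matches the value from Step 3. Hence $\mathcal C$ is MDS.

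\textbf{Main obstacle.} The delicate point is that Theorem \ref{thmD} was stated with blocks $G_0,\dots,G_{\ell-1}$, and its induction invokes the case $m=1$ through Theorem \ref{th1D}. For $\ell=0$ that base case is exactly Theorem \ref{th1DMDS}.

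In the inductive step, the stacked matrices $[G^{(0)};\dots;G^{(i)}]$ become single rows $G^{(i)}$ of degree $\delta-i$, treated as generators in $m-1$ variables. Each $\Phi_{m-1}(G^{(i)})$ is a submatrix of $\Phi_m(G)$ and is therefore superregular.

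The concern is that $w^{(i)}$ is computed from $u^{(0)},\dots,u^{(i)}$ and not from $u^{(0)}$ alone. So if the $\ell=0$ reading of Theorem \ref{thmD} looks shaky, I would redo the induction directly for a single row, as follows:
\begin{itemize}
\item write $w=uG$ with $u\in R$ nonzero, normalized so that $u^{(0)}\neq 0$;
\item observe that $w^{(i)}=\sum_{j\le i}u^{(i-j)}G^{(j)}$;
\item apply the induction hypothesis to the $(m-1)$D code generated by the stacked rows $G^{(0)},\dots,G^{(i)}$, whose last row has the smallest degree $\delta-i$;
\item sum the resulting bounds using Lemma \ref{card2}.
\end{itemize}
This is precisely the argument of Theorem \ref{thmD}, so I expect no new difficulty beyond verifying these degree and size conditions.
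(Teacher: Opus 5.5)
Your Steps 1--4 are exactly the paper's argument. The paper gives no separate proof: it reads the statement off as the case $\ell=0$ of Theorem~\ref{thmD}, together with the remark that for $k=1$ the bound of Theorem~\ref{thmDgensinbou} equals $n\binom{\delta+m}{m}$. Your rate and degree checks (a nonzero row has full rank; $1\times 1$ unimodular matrices are nonzero constants) are correct and fill in what the paper leaves implicit. As a reduction to Theorem~\ref{thmD}, then, your proof is the paper's proof. What does not hold up is your closing claim that a self-contained induction needs nothing ``beyond verifying these degree and size conditions''. Those conditions are precisely where the argument breaks, and your Step 3 inherits the same defects through Theorem~\ref{thmD}.

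First, size. You handle the dependence of $w^{(i)}=\sum_{j\le i}u^{(i-j)}G^{(j)}$ on $u^{(0)},\dots,u^{(i)}$ correctly, by viewing it as a codeword of the $(m-1)$D code generated by $[G^{(0)};\dots;G^{(i)}]$, whose rows have degrees $\delta,\delta-1,\dots,\delta-i$. But Theorem~\ref{thmD} in $m-1$ variables (and, at the bottom, Theorem~\ref{th1D}) applies to this stack only if $n\ge\sum_{j=0}^{i}(\delta-j+1)$. That is $2\delta+1$ already for $i=1$ and $\binom{\delta+2}{2}$ for $i=\delta$, and $n\ge\delta+1$ gives none of it.

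The slice bound is in fact false in this regime, even when $G$ has full support. Take $m=2$ and $n=\delta+1\ge 3$. The stack at $i=\delta$ is a square matrix $M$ over $\mathbb{F}_q[z_1]$ whose constant term is an $n\times n$ submatrix of $\Phi_2(G)$. So $\det M$ has nonzero constant term and divides $z_1^N-1$ for some $N$. Then $x=[u^{(\delta)},\dots,u^{(0)}]:=\frac{z_1^N-1}{\det M}\,e_1\operatorname{adj}(M)$ satisfies $xM=(z_1^N-1)e_1$. Its last entry $u^{(0)}$ is nonzero, because the constant term of the cofactor involved is an $(n-1)\times(n-1)$ minor of $\Phi_2(G)$. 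Hence $\operatorname{wt}(w^{(\delta)})=2<n$, contradicting the required bound $\operatorname{wt}(w^{(\delta)})\ge n$.

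Second, full support. You, like the paper, use that $G^{(i)}$ has degree exactly $\delta-i$ and that $G$ has $\binom{\delta+m}{m}$ nonzero coefficients. With $\Phi_m$ built recursively from the actual degrees, superregularity does not force this. For $m=2$, take $G=a+bz_2$ with $[a;b]$ superregular and $n=2=\delta+1$: every hypothesis is met, yet $\operatorname{wt}(G)=2n<3n$.

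So the statement needs an explicit full-support assumption. Even with it, the proof of Theorem~\ref{thmD} that your Step 3 rests on does not cover $n$ close to $\delta+1$. Bounding slice by slice requires a much larger $n$; otherwise the weights would have to be controlled jointly across slices by some other argument.
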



More general, \hyperref[thmD]{Theorem \ref{thmD}} gives conditions for construction of MDS $m$D convolutional codes of rate $k/n$ and degree $\delta=k \nu+k-1$, for some $\nu \in \mathbb N$, as stated in the following theorem. Note that an MDS convolutional code $\cal C$ with these parameters has free distance equal to $ n  \frac{(\nu+m)!}{\nu!m!}$, \hyperref[thmDgensinbou]{Theorem \ref{thmDgensinbou}}, and therefore any generator matrix of such code can not have a row with degree smaller that $\nu$ since such row would be a codeword of $\cal C$ with weight smaller than  $ n  \frac{(\nu+m)!}{\nu!m!}$.

\begin{theorem}
   Let $n \geq \delta+k=k(\nu+2)-1$. Let $$G(z_1, \dots,z_m)=\left[\begin{array}{c}
    G_0(z_1, \dots, z_m) \\
    G_1(z_1, \dots, z_m) \\
    \vdots \\ G_{k-1}(z_1, \dots, z_m) \end{array}\right] \in R^{k \times n}$$ be a generator matrix of an mD convolutional code $\mathcal{C}$, with $G_i(z_1, \dots, z_m)$ of degree $\nu + 1$, $i=0,1,\dots, k-2$, and  $G_{k-1}(z_1, \dots, z_m)$ of degree $\nu$, such that $\Phi_m(G(z_1, \dots,z_m))$ is superregular. Then $G(z_1, \dots,z_m)$ is a generator matrix of an MDS $m$D convolutional code of rate $k/n$ and degree $\delta=k \nu+k-1$.
\end{theorem}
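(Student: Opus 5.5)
The plan is to apply Theorem \ref{thmD} with $\ell = k-1$, $k_0=\cdots=k_{\ell-1}=1$, $\nu_0=\cdots=\nu_{\ell-1}=\nu+1$ and $\nu_\ell=\nu$. This gives the lower bound on the free distance. We then need the generalized Singleton bound of Theorem \ref{thmDgensinbou} together with a check that the code really has degree $\delta=k\nu+k-1$.

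\textbf{Step 1 (hypotheses of Theorem \ref{thmD}).} The blocks $G_0,\dots,G_{k-2}$ are single rows of degree $\nu+1$, and $G_{k-1}$ is a single row of degree $\nu<\nu+1$. So the degree pattern $\nu_0\geq\cdots\geq\nu_{\ell-1}>\nu_\ell$ is satisfied. The length condition of Theorem \ref{thmD} reads
\[
n\geq (k-1)(\nu+2)+\nu+1=k(\nu+2)-1=\delta+k,
\]
which is exactly our assumption. Since $\Phi_m(G)$ is superregular, Theorem \ref{thmD} yields
\[
d_{free}(\mathcal C)=n\frac{(\nu+m)!}{\nu!\,m!}.
\]

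\textbf{Step 2 (degree of the code).} The external degree of $G$ is $(k-1)(\nu+1)+\nu=k\nu+k-1$, so $\deg\mathcal C\leq k\nu+k-1$. This is the step I expect to be the main obstacle, because for $m$D codes degree and complexity can differ and there is no row-reduced form to appeal to. I would argue by contradiction using the bound itself.

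Suppose $\mathcal C$ had a generator matrix $G'$ with external degree $\delta'<k\nu+k-1$. Then $\lfloor\delta'/k\rfloor\leq\nu$.

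\emph{Case $\lfloor\delta'/k\rfloor<\nu$.} Then $G'$ has a row of degree $\mu\leq\nu-1$. That row is a nonzero codeword of weight at most $n\binom{\mu+m}{m}<n\binom{\nu+m}{m}$, contradicting Step 1. This is the observation made in the remark before the theorem.

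\emph{Case $\lfloor\delta'/k\rfloor=\nu$.} Then $\delta'-k\nu\leq k-2$. Applying Theorem \ref{thmDgensinbou} with $\delta'$ gives
\[
d_{free}\leq n\binom{\nu+m}{m}-k(\nu+1)+\delta'+1\leq n\binom{\nu+m}{m}-1,
\]
again a contradiction. Note that the proof of Theorem \ref{thmDgensinbou} works for any generator matrix with external degree $\delta'$, not only a minimal one, so this application is legitimate. Hence $\deg\mathcal C=k\nu+k-1$.

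\textbf{Step 3 (conclusion).} With $\delta=k\nu+k-1$ we have $\lfloor\delta/k\rfloor=\nu$ and $-k(\nu+1)+\delta+1=0$. So the bound of Theorem \ref{thmDgensinbou} equals $n\binom{\nu+m}{m}$, which Step 1 shows is attained. Therefore $\mathcal C$ is MDS of rate $k/n$ and degree $\delta$.
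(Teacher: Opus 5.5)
Your proposal is correct and follows the paper's route. Like the paper, you specialize Theorem~\ref{thmD} with $\ell=k-1$, $k_0=\cdots=k_{\ell-1}=1$, $\nu_0=\cdots=\nu_{\ell-1}=\nu+1$ and $\nu_\ell=\nu$, check that its length condition becomes $n\geq k(\nu+2)-1$, and note that the bound of Theorem~\ref{thmDgensinbou} at $\delta=k\nu+k-1$ equals $n\frac{(\nu+m)!}{\nu!\,m!}$. Your Step 2 proves that the code's degree is exactly $k\nu+k-1$, which the paper only gestures at in its remark before the theorem (that remark is your first case); your second case, using the Singleton bound, correctly rules out the remaining possibility $\lfloor\delta'/k\rfloor=\nu$.
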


\section{Conclusions}

The results presented in this paper help to improve the understanding of MDS $m$D convolutional codes and point to several possible directions for future work, such as exploring generator matrices with different row degree conditions, considering different superregular structures, and finding explicit constructions over small finite fields.

\section*{Acknowledgements}
This work is supported by The Center for Research and Development in Mathematics and Applications (CIDMA) under the Portuguese Foundation for Science and Technology 
(FCT, \url{https://ror.org/00snfqn58})  
Multi-Annual Financing Program for R\&D Units,
grants UID/4106/2025 and UID/PRR/4106/2025.
The second author is supported by the German research foundation, project number 513811367.

\bibliographystyle{alpha}
\bibliography{mult}

\end{document}